\crefname{assumption}{assumption}{assumptions}\Crefname{assumption}{Assumption}{Assumptions}
\renewcommand{\epsilon}{\varepsilon}
\newcommand{\N}{\mathbb{N}}
\newcommand{\Z}{\mathbb{Z}}
\newcommand{\R}{\mathbb{R}}
\newcommand{\from}{\gets}
\newcommand{\E}{\mathop{{}\mathbb{E}}}
\DeclareMathOperator{\calC}{\mathcal{C}}
\DeclareMathOperator{\calD}{\mathcal{D}}
\DeclareMathOperator{\calN}{\mathcal{N}}
\DeclareMathOperator{\calW}{\mathcal{W}}
\DeclareMathOperator{\calT}{\mathcal{T}}
\newcommand{\wstar}{\vec{w}^*}
\newcommand{\wwat}{\vec{w}_{\mathsf{wat}}}
\newcommand{\Wat}{\mathsf{Watermark}}
\newcommand{\Setup}{\mathsf{Setup}}
\newcommand{\Detect}{\mathsf{Detect}}
\newcommand{\DetectDistinct}{\mathsf{DetectDistinct}}
\newcommand{\WeightDetect}{\mathsf{WeightDetect}}
\newcommand{\TextDetect}{\mathsf{TextDetect}}
\newcommand{\rmvgame}{\mathcal{G}^{\text{remov}}}
\newcommand{\wat}[1]{{#1}_{\mathsf{wat}}}
\newcommand{\bara}{\overline{\alpha}}
\newcommand{\barp}{\overline{p}}
\newcommand{\pgreen}{p_{\mathsf{green}}}
\newcommand{\Unique}{\mathsf{Unique}}
\newcommand{\counttt}{\texttt{count}}
\newcommand{\ttext}{\tau_{\text{text}}}
\newcommand{\Ber}{\text{Ber}}
\newcommand{\len}{\text{len}}
\newcommand{\heading}[1]{{\vspace{5pt}\noindent\textbf{#1.}}}
\newtheorem*{rep@theorem}{\rep@title}
\newcommand{\newreptheorem}[2]{%
\newenvironment{rep#1}[1]{%
 \def\rep@title{#2 \ref{##1}}%
 \begin{rep@theorem}}%
 {\end{rep@theorem}}}
\newcommand{\authnote}[3]{}
\newcommand{\mir}[1]{\authnote{Miranda}{#1}{blue}}
\newcommand{\mariana}[1]{\authnote{Mariana}{#1}{green}}
\newtheorem{theorem}{Theorem}
\newtheorem*{theorem*}{Theorem}
\newtheorem*{corollary*}{Corollary}
\newtheorem{lemma}{Lemma}
\newtheorem*{claim*}{Claim}
\newtheorem{fact}[lemma]{Fact}
\newtheorem*{fact*}{Fact}
\theoremstyle{definition}
\newtheorem{definition}{Definition}
\newtheorem*{definition*}{Definition}
\newtheorem{assumption}{Assumption}
\newtheorem*{condition*}{Condition}
\newtheorem*{calculation*}{Calculation}
\crefname{construction}{construction}{constructions}
\title{Provably Robust Watermarks for Open-Source Language Models}
\author{Miranda Christ\thanks{Authors listed in alphabetical order.}\\ Columbia University \\ \texttt{\small mchrist@cs.columbia.edu}
\And 
Sam Gunn\footnotemark[1] \\ UC Berkeley \\ \texttt{\small gunn@berkeley.edu}
\And
Tal Malkin\footnotemark[1] \\ Columbia University \\ \texttt{\small tal@cs.columbia.edu}
\And
Mariana Raykova\footnotemark[1] \\ Google \\ \texttt{\small marianar@google.com}}
\date{}
\begin{document}
\maketitle

\begin{abstract}
The recent explosion of high-quality language models has necessitated new methods for identifying AI-generated text.
Watermarking is a leading solution and could prove to be an essential tool in the age of generative AI.
Existing approaches embed watermarks at inference and crucially rely on the large language model (LLM) specification and parameters being secret, which makes them inapplicable to the open-source setting. 
In this work, we introduce the first watermarking scheme for open-source LLMs.
Our scheme works by modifying the parameters of the model, but the watermark can be detected from just the outputs of the model.
Perhaps surprisingly, we prove that our watermarks are \emph{unremovable} under certain assumptions about the adversary's knowledge.
To demonstrate the behavior of our construction under concrete parameter instantiations, we present experimental results with OPT-6.7B and OPT-1.3B.
We demonstrate robustness to both token substitution and perturbation of the model parameters.
We find that the stronger of these attacks, the model-perturbation attack, requires deteriorating the quality score to 0 out of 100 in order to bring the detection rate down to 50\%.
\end{abstract}

\section{Introduction} \label{sec:intro}
As generative AI becomes increasingly capable and available, reliable solutions for identifying AI-generated text grow more and more imperative. 
Without strong identification methods, we face risks such as model collapse \citep{shumailov2024ai}, mass disinformation campaigns \citep{disinformation}, and detection false positives leading to false plagiarism accusations \citep{falsepositives}.
Watermarking is a prominent and promising approach to detection.
Recent works (\cite{kgw,scott,zalw,cgz,kthl,public,cg}) construct ``sampler-based watermarks,'' for the setting where the watermark is embedded at generation time and users have only query access to the model. 
Such watermarks modify the algorithm used by the LLM to sample from the probability distribution over the next token, leaving the underlying neural network untouched.
For example, \cite{zalw} shifts this distribution to prefer sampling tokens on a fixed ``green list.''
These existing approaches are ill-suited for scenarios where an attacker has access to the code for the watermarked model and can simply rewrite the sampling algorithm to sample from the unmodified probability distribution --- yielding the original, unwatermarked model, with no loss in quality.

Such scenarios are gaining importance as open-source models become more widely available and higher quality (e.g., LLaMA \cite{llama}, Mistral \cite{mistral}, OPT \cite{opt}).
However, watermarks for open-source models have been severely understudied.
In this work we initiate a formal study of such watermarks, in the setting where the model parameters and associated code are publicly available, and the watermark is embedded by altering the weights of the neural network. 
We consider the standard notion of autoregressive language models used in existing watermarking works (see \Cref{sec:language-models} for a precise definition).
Constructing watermarks that are not easily removable by an attacker with such comprehensive model access is a significant challenge.

\heading{Our contributions}
We put forth a formal definition of an unremovable watermark for neural networks, then construct such a watermarking scheme.
Unlike sampler-based watermarks, ours modifies the weights of the model.
Our watermark is unremovable in the sense that, under certain assumptions, it cannot be removed without degrading the quality of the model.
Furthermore, our watermark can be detected not only with access to the weights, but even from a modest amount of text ($\sim$300 tokens) produced by the model.

Of course, one must place a restriction on the allowable ways in which the adversary may modify the weights, otherwise it can ignore the watermarked model and generate unwatermarked text of its own.
We do so by imposing a strict quality condition:
We prove that if an adversary removes our watermark, the resulting model is \emph{low-quality}, assuming that the attacker has some uncertainty about the distribution of high-quality text.
This assumption is necessary because otherwise, the adversary could simply ignore the watermarked model and train a new model itself.
We precisely formalize this assumption and the tradeoff between quality degradation and watermark strength.

In addition to proving unremovability, we provide a suite of experiments using OPT-6.7B and OPT-1.3B, showing that our theoretical guarantees translate to practice.
We find that the strongest attack we consider requires deteriorating text quality to zero out of 100 in order to bring the detection rate to 50\%.
Our approach is quite general, and our techniques may be of independent interest due to their applicability in other settings where one wishes to watermark high-dimensional data.

\section{Technical Overview} \label{sec:techo}
\heading{Formalizing unremovability (\Cref{sec:wat-defs})}
Recall that the goal of the adversary is to take a watermarked model and produce a high-quality unwatermarked model.
One cannot hope for a watermark to be unremovable by an adversary with sufficient knowledge to train its own unwatermarked model.
Therefore, we must consider only adversaries \emph{with limited knowledge about the distribution of high-quality text}.
We formalize this notion as follows:
we model ideal-quality language as being described by an original neural network $M^*$.
That is, the ideal-quality language is the distribution of text that would be produced by an LLM using $M^*$ to compute the probability distribution over each token.
This neural network $M^*$ is then watermarked, and the resulting $M'$ is given to the adversary.
The adversary's uncertainty is captured by its \emph{posterior distribution over } $M^*$, given $M'$.

We assume that the adversary's posterior distribution over $M^*$ is normally distributed in a particular representation space of models, and centered at $M'$.
Under this assumption, we prove that there is a steep trade-off between the magnitude of changes the adversary must make to remove the watermark, and the magnitude of changes required to add the watermark.
That is, text produced by the adversary with access to $M'$ is either watermarked or low-quality.

\heading{Our scheme (\Cref{sec:watermark})}
Our scheme is quite natural: We modify the bias of each neuron in the last layer of the model by adding a perturbation from $\calN(0, \epsilon^2)$, as described in \Cref{alg:watermarking-setup,alg:watermarking-gen}.
The watermarking detection key is the vector of these perturbations.
Given the weights of a model, we can detect the watermark by checking the correlation between the biases of its last layer and the watermark perturbations.
More precisely, we compute the inner product between the watermark perturbations, and the difference between the biases of the given model and the original model (\Cref{alg:watermarking-det}).

Furthermore, we observe that it is possible to approximate this inner product \emph{given only text} produced by the model.
This text detector (\Cref{alg:watermarking-det-text}) computes a score that is the sum of the watermark perturbations corresponding to the distinct tokens observed in the text.
We show that in expectation, this score is approximately the inner product from the detector from weights (\Cref{alg:watermarking-det}), where the approximation error is lower for higher-entropy text.
We prove that in sufficiently high-entropy text, our text detector succeeds with overwhelming probability.

\heading{Proving unremovability (\Cref{thm:unremovability-text})}
We show unremovability in two steps:
(1) Any high-quality model produced by altering the watermarked model must have a high inner product score (\Cref{thm:unremovability-weights}), and (2) If a language model has a high inner product score, its responses are watermarked (\Cref{thm:unremovability-text}).

To understand the intuition behind (1), consider the watermarked and unwatermarked models as vectors $\wwat, \wstar \in \R^n$, respectively.
The adversary is given $\wwat$, and it aims to produce $z \in \R^n$ that is unwatermarked, but that is sufficienty close to $\wstar$ (i.e., has high quality).
We can describe the region of watermarked vectors geometrically: This region is essentially a halfspace, where the hyperplane describing it lies between $\wwat$ and $\wstar$, and it is orthogonal to the line from $\wwat$ to $\wstar$.
Removing the watermark involves crossing this hyperplane.
If $\wstar$ is known, the most efficient way to remove the watermark is to add to $\wstar$ in the direction of $(\wstar - \wwat)$.
However, because the adversary's posterior distribution over $\wstar$ is centered at $\wwat$, the adversary is unlikely to produce a short vector $(z - \wwat)$ that moves far in the direction of $(\wstar - \wwat)$: We expect its weight in this direction to be proportional to its length.
This allows us to show that, if $z$ is not watermarked, then $(z - \wwat)$ has large magnitude.
If $(z - \wwat)$ has large magnitude, then $z$ is far from $\wstar$ and therefore low quality.

Now consider some text produced by a model on the watermarked side of the hyperplane. 
For each token, if the watermark increases its bias relative to that of $\wstar$, the model is more likely to output that token. 
Prior work of \cite{zalw} uses this principle to obtain a simple ``counting'' detector, which essentially computes the fraction of positive-bias tokens in the given response.
However, we observe that because our watermark alters different token biases by different amounts, we can substantially improve detectability by taking the magnitudes of the perturbations into account in our detector.
Our detector computes a score, which is the sum of bias perturbations of all observed tokens in the text (in contrast to the counting detector, which is the sum of indicators of the bias perturbations' \emph{signs}).
Interestingly, we prove that our text detector, when applied to text produced by $z$, approximates the inner product between $(z - \wstar)$ and $(\wwat - \wstar)$.
Because this inner product is high for watermarked $z$ and roughly 0 for independently generated $z$, our detector is reliable and has a negligible false positive rate.
Similarly to other watermarks (e.g., \cite{kgw,cgz,zalw}) our proof of detectability from text (2) relies on sufficient entropy in the text.

When proving unremovability of the text watermark (2), we assume that the adversary alters only the biases in the last layer of the watermarked model it is given.
We argue that an adversary successfully making arbitrary alterations to the model could use this new model to learn ways to alter only the biases of the watermarked model.
We leave a formal expansion of the class of adversaries to future work.

\heading{Experiments (\Cref{sec:experiments})}
We show that our watermark is indeed detectable and unremovable in practice, using experiments run on OPT-6.7B and OPT-1.3B \citep{opt}. We use Mistral-7B Instruct \citep{mistral} to measure the quality of responses, and we show that the watermarked models with our scheme preserve similar quality to their unwatermarked version. 
We demonstrate reliable detectability rates for our construction, given $\sim$300-token responses of various types. 
In addition to our proofs for unremovability, we consider a number of concrete adversaries and demonstrate that the watermark persists in the adversarially altered watermarked models unless the adversary adds large perturbations that significantly reduce the quality of the resulting model. 

\section{Related work} \label{sec:related-work}
We describe the most relevant related works here, but refer readers to \cite{tang2023science,piet2023mark} for more comprehensive surveys.

Existing watermarks change the sampling function of the LLM, and are therefore easily removable given code for this sampling function.
\cite{scott,kgw,zalw,cgz,public} sample each token by partitioning the token set into red and green lists (that depends on the previously-output tokens), then increasing the probability of the tokens in the green list.
To detect the watermark, one computes the fraction of green tokens in the given text and compares it to a threshold.
\cite{kthl,cg} operate slightly differently---for each response, they choose a random seed.
In \cite{kthl} there are polynomially many possible seeds for a given watermarking key, and in \cite{cg} there are exponentially many.
They then choose the $i^{\text{th}}$ token of the response to be correlated with the $i^{\text{th}}$ value of the random seed.
To detect the watermark, one essentially computes the correlation between the response and the seed.
While not unremovable, some of these watermarks satisfy a weaker \emph{robustness} property: Given a watermarked response, it is difficult to make a bounded number of edits to yield an unwatermarked text.

Our scheme is most similar in spirit to that of \cite{zalw}, though theirs is a sampler-based watermark. They choose a fixed red/green partition that is used to change the sampling function for all responses.
Similarly, we fix the partition of tokens whose biases we increase and decrease.
The two major differences are that (1) we make this change by altering the weights of the model, and (2) while \cite{zalw} increases the logits of all green tokens by the same amount, our increases are normally distributed.
This normal distribution is crucial for unremovability, and it results in a different detector being most effective for our scheme.
The detector of \cite{zalw} simply computes the fraction of green tokens, while our inner product detector computes a score that is essentially a weighted count of the number of green tokens.
These weights correspond to the size of the perturbations added by our watermark.

The works closest to our setting are \cite{learnable,radioactive}, which show empirically that certain sampler based watermarks can be learned.
That is, a model trained on watermarked texts may learn to generate watermarked responses itself.
The watermark therefore is embedded in the weights of the model; however, there is no unremovability guarantee, and the detectability of these responses is much lower than their sampler-based counterparts.
In fact, \cite{learnable} finds empirically that the watermark is destroyed after the model is fine-tuned and leaves the question of unremovable open-source watermarks for future work.

We also note that prior work \cite{sand} shows that any LLM watermark is removable by a sufficiently strong adversary. 
This underscores our need to consider an adversary with limited knowledge about high-quality language when defining and showing unremovability.

\section{Preliminaries and definitions} \label{sec:prelims}
Let $\N := \{1, 2, \dots\}$ denote the set of positive integers. We will write $[q] := \{1, \dots, q\}$. For a set $X$, we define $X^* := \{(x_1, \dots, x_k) \mid x_1, \dots, x_k \in X \wedge k \in \Z_{\ge 0}\}$ to be the set of all strings with alphabet $X$. For a binary string $s \in X^*$, we let $s_i$ denote the $i^{\text{th}}$ symbol of $s$ and $\len s$ denote the length of $s$. For a string $s \in X^*$ and positive integers $a \leq b \le \len s$, let $s[a:b]$ denote the substring $(s_a, \ldots, s_b)$. We use $\log(x)$ to denote the logarithm base 2 of $x$, and $\ln(x)$ to denote the natural logarithm of $x$.

For a finite set $X$, we will use the notation $x \from X$ to denote a uniformly random sample $x$ from $X$. If $X$ is a set of $n$-dimensional column vectors, we will write $X^m$ to refer to the set of $n \times m$ matrices whose columns take values in $X$. Unless otherwise specified, vectors are assumed to be column vectors. 
For a vector $x \in \R^n$, we let $\norm{x} = \norm{x}_2 = \sqrt{\sum_{i=1}^n x_i^2}$.

Let $\Ber(p)$ be the Bernoulli distribution on $\{0,1\}$ with expectation $p$. Let $\Ber(n, p)$ be the distribution on $n$-bit strings where each bit is an i.i.d sample from $\Ber(p)$.
For a distribution $\calD$, we let $\supp{\calD}$ denote its support.

Let $\secpar$ denote the security parameter. 
A function $f$ of $\secpar$ is \emph{negligible} if $f(\secpar) = O(\frac{1}{\poly})$ for every polynomial $\poly[\cdot]$.
We write $f(\secpar) \leq \negl$ to mean that $f$ is negligible.
We say a probability is \emph{overwhelming} in $\secpar$ if it is equal to $1 - f$ for some negligible function $f$.
We let $\approx$ denote computational indistinguishability and $\equiv$ denote statistical indistinguishability.

We provide a formal description of a language model in \Cref{sec:language-models}.

\subsection{Watermarks} \label{sec:wat-defs}
We present general definitions for watermarking content from some content distribution $\calC$ of real vectors; that is, $\supp{\calC} \subseteq \R^n$. 
We will later use this real-content watermark framework to watermark the weights of a neural network.

\begin{definition}[Watermark]
    A watermark is a tuple of polynomial-time algorithms $\calW = (\Setup, \Wat, \Detect)$ such that:
    \begin{itemize}
        \item $\Setup(1^\secpar) \to \sk$ outputs a secret key, with respect to a security parameter~$\secpar$.
        \item $\Wat_{\sk}(x) \to x'$ is a randomized algorithm that takes as input some content $x$ and outputs some watermarked content $x'$.
        \item $\Detect_{\sk}(x') \to \{\true, \false\}$ is an algorithm that takes as input some content $x'$ and outputs~$\true$ or~$\false$.
    \end{itemize}
\end{definition}

\begin{definition}[Quality loss function]
    A \emph{quality loss function} for content from $\R^n$ is a function $L : \R^n \times \R^n \to \R$. We say that $L(x, y)$ is the \emph{quality loss score} of the content $y \in \R^n$ relative to some ideal content $x$.
\end{definition}

The following removability game is defined over a content distribution $\calC$, a quality measure $L$ over $\R^n$, and a loss parameter $\ell(\cdot): \Z \to \R$.
\begin{definition}[$(\calC, L, \ell)$-Removability game $\rmvgame_{\adv, \calW, \calC}(1^\secpar, L, \ell)$]
    Let $\calC$ be a content distribution. The removability game $\rmvgame_{\adv, \calW, \calC}(1^\secpar, L, \ell)$ is defined between an adversary $\adv$ and a challenger as follows.
    \begin{enumerate}
        \item The challenger runs $\sk \gets \Setup(1^\secpar)$.
        \item The challenger chooses some original content $x \gets \calC$ and produces watermarked content $\wat{x} \gets \Wat_\sk(x)$.
        \item $\adv$ receives $\wat{x}$ and produces $x'$.
        \item The adversary wins if $x'$ is not watermarked, and its quality loss is acceptable: $L(x, x') \leq \ell(n)$.
    \end{enumerate}
    The output of the game $\rmvgame_{\adv, \calW, \calC}(1^\secpar, \calC, L, \ell)$ is 1 if and only if the adversary wins.
\end{definition}

\begin{definition}[Unremovability]
    Let $\calW$ be a watermark for content in $\R^n$. 
    Let $\calC$ be a content distribution, $Q$ be a quality loss function over $\R^n$, and $\ell(\cdot): \R^n \times \R^n \to \R$ be a loss parameter.
    We say that $\calW$ is $(\calC, L, \ell)$-\emph{unremovable} if for all p.p.t. adversaries $\adv$,
    $$\Pr \left[\rmvgame_{\adv, \calW, \calC}(1^\secpar, L, \ell) = 1 \right] \leq \negl.$$
\end{definition}

\begin{definition}[Soundness/low false positive rate]
    Let $\calW$ be a watermark for content in $\R^n$.
    $\calW$ is \emph{sound} if for any fixed $x \in \R^n$,
    $$\Pr_{\sk \gets \Setup(1^\secpar)} \left[\Detect_\sk(x) = \true \right] \leq \negl.$$
\end{definition}

As shorthand, when the setup and watermark algorithms are clear from context, we sometimes write that $\Detect$ (rather than $\calW$) is unremovable or sound.
\section{Watermarking scheme} \label{sec:watermark}
We first show a general watermarking scheme for vectors in $\R^n$. We then show how to apply this paradigm to neural networks, by watermarking the biases of the output layer.
Our watermark detector is strongest when given explicit access to the weights of the watermarked model (\Cref{sec:det-weights}).
Furthermore, for a language model, where these biases directly affect the distribution of tokens in its outputs, our watermark is even detectable in text produced by a watermarked model (\Cref{sec:det-text}).

\begin{figure}
\begin{minipage}[t]{\textwidth}
\begin{algorithm}[H]
\caption{Watermarked content generator $\Setup$}
\label{alg:watermarking-setup}
    \KwResult{Watermark secret key $\Delta \in \R^n$}
    \Return $\Delta \gets \calN(0, \epsilon^2 I)$\;
\end{algorithm}
\end{minipage}
\hspace{0.03\textwidth}
\begin{minipage}[t]{\textwidth}
\begin{algorithm}[H]
\caption{Watermarked content generator $\Wat$}
\label{alg:watermarking-gen}
    \KwIn{Content $x \in \R^n$ and watermark secret key $\Delta \in \R^n$}
    \KwResult{Watermarked content $x' \in \R^n$ and original content $x$}
    $x' \gets x + \Delta$\;
    \Return $x', x$\;
\end{algorithm}
\end{minipage}
\hspace{0.03\textwidth}
\begin{minipage}[t]{0.48\textwidth}
\begin{algorithm}[H]
\caption{Watermark detector $\WeightDetect_\Delta$}
\label{alg:watermarking-det}
    \KwIn{Content $c \in \R^n$, original content $x \in \R^n$, and a detection key $\Delta \in \R^n$}
    \KwResult{$\true$ or $\false$ depending on whether $c$ is watermarked}
    \If{$(c - x) \cdot \Delta \geq \tau \epsilon^2 n$ and $\norm{c - (x + \Delta)} \leq \frac{1}{2}\epsilon^2 n$}{
        \Return $\true$\;
    }
    \Return $\false$\;
\end{algorithm}
\end{minipage}
\hspace{0.03\textwidth}
\begin{minipage}[t]{0.48\textwidth}
\begin{algorithm}[H]
\caption{Watermark detector $\TextDetect_\Delta$}
\label{alg:watermarking-det-text}
    \KwIn{Text $x_1, \ldots, x_\ell \in \calT$ and detection key $\Delta \in \R^n$}
    \KwResult{$\true$ or $\false$ depending on whether the text is watermarked}
    $S \gets \emptyset$\;
    $\counttt \gets 0$\;
    \For{$i \in [\ell]$}{
        \If{$x_i \notin S$}{
            $\counttt \gets \counttt + \Delta(x_i)$\;
            $S \gets S \cup \{x_i\}$\;
        }
        \If{\upshape $\abs{S} \geq \secpar$ and $\counttt \geq \abs{S} \epsilon^2 \ttext$}{
        \Return $\true$\;
        }
    }
    \Return $\false$\;
\end{algorithm}
\end{minipage}
\caption{The algorithms $\calW = (\Setup, \Wat, \WeightDetect, \TextDetect)$. \label{fig:w-algos}}
\end{figure}

\subsection{Detecting the watermark from the weights of the model} \label{sec:det-weights}
We consider an arbitrary vector $\wstar$ in $\R^n$ to which Gaussian noise is added, to obtain $\wwat$.
Observe that the inner product between $(\wwat - \wstar)$ and the vector of the Gaussian perturbations $\Delta$ is large.
Naturally, our detector $\WeightDetect$ (\Cref{alg:watermarking-det}) computes this inner product.
We show that any adversary whose posterior distribution (after seeing $\wwat$) over $\wstar$ is Gaussian cannot remove the watermark without adding significant perturbations of its own.
In particular, if $\wstar$ has dimension $n$, the adversary must add a vector of Euclidean norm $\Omega(n/\sqrt{\log n})$ to succeed at removing the watermark.
In contrast, embedding the watermark involves adding a perturbation of Euclidean norm only $O(\sqrt{n})$: Watermark removal requires a change that is larger than watermark embedding \emph{by a factor of nearly} $\sqrt{n}$.
So far, this approach is general to watermarking any real vector. Although our work focuses on applying it to LLMs, it is likely that this paradigm can be applied to other scenarios, which we leave for future work.

To apply our watermark to LLMs, we will later take $\wstar$ to be the biases in the last layer of a neural network, making $n$ the size of the token alphabet.

\begin{definition}[$L_2$: Euclidean quality loss function.]
    Let $L_2 : \R^n \times \R^n \to \R$ be the quality loss function that, on input $x, y \in \R^n$, outputs $\norm{x - y}_2$. We call $L_2$ the \emph{Euclidean loss function}.
\end{definition}

\begin{theorem} \label{thm:unremovability-weights}
    Let $I$ be the $n \times n$ identity matrix, and let $\calC$ be such that the adversary's posterior distribution over the original content $\wstar$ after seeing $\wwat$ is $\calN(0, \epsilon^2 I)$. 
    
    Let the loss parameter be 
    $\ell(n) := \frac{\epsilon n}{\sqrt{\log n}}.$ Then $\WeightDetect$ is $(\calC, L_2, \ell)$-unremovable.
\end{theorem}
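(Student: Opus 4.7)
The plan is to exploit the assumed isotropic-Gaussian posterior of $\wstar$ given $\wwat$ to control the adversary's power geometrically. I would fix the adversary's view $\wwat$ and reinterpret: by the theorem's hypothesis, $\Delta := \wwat - \wstar$ has distribution $\calN(0, \epsilon^2 I)$ conditional on $\wwat$, while the adversary's displacement $v := z - \wwat$---being a function of $\wwat$ and independent coins---is conditionally independent of $\Delta$. The algebraic identity driving everything is
\[
(z - \wstar) \cdot \Delta \;=\; (v + \Delta) \cdot \Delta \;=\; v \cdot \Delta \;+\; \|\Delta\|^2,
\]
which expresses the $\WeightDetect$ score as a strongly concentrated chi-squared term plus a zero-mean Gaussian cross term.

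I would then invoke two standard concentration bounds: chi-squared concentration to obtain $\|\Delta\|^2 \geq (1-o(1))\epsilon^2 n$ with overwhelming probability, and, conditionally on $v$, the scalar Gaussian tail applied to $v \cdot \Delta \sim \calN(0, \epsilon^2 \|v\|^2)$ to obtain $|v \cdot \Delta| \leq O(\epsilon \|v\| \sqrt{\log n})$ except with probability $n^{-\Omega(1)}$. The conditional independence of $v$ and $\Delta$ given $\wwat$ is precisely what makes the second bound apply uniformly over adversary strategies.

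The heart of the argument is a case split on $\|v\|$ at the threshold $\Theta(\ell(n)) = \Theta(\epsilon n / \sqrt{\log n})$. In the small-norm regime $\|v\| = O(\ell(n))$, the Gaussian bound gives $|v \cdot \Delta| \leq O\bigl(\epsilon \cdot \epsilon n / \sqrt{\log n} \cdot \sqrt{\log n}\bigr) = O(\epsilon^2 n)$; tuning constants so this is strictly below $(1-\tau)\epsilon^2 n$ yields $(z - \wstar) \cdot \Delta \geq \tau \epsilon^2 n$ and fires condition (1) of $\WeightDetect$, and one checks separately that condition (2) is also met (likely under a mild assumption on $\epsilon$ relative to $1/\sqrt{\log n}$). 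In the large-norm regime $\|v\| = \omega(\ell(n))$, the triangle inequality combined with $\|\Delta\| = O(\epsilon \sqrt{n}) = o(\ell(n))$ gives $\|z - \wstar\| = \|v + \Delta\| \geq \|v\| - \|\Delta\| > \ell(n)$, so the $L_2$ quality budget is violated.

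The hardest part is calibrating the two regimes to glue at exactly the stated $\ell(n) = \epsilon n / \sqrt{\log n}$. This specific scale is the breakeven point where the Gaussian cross-term's standard deviation $\epsilon \|v\|$ times its $\sqrt{\log n}$ tail factor exactly matches the $\epsilon^2 n$ scale of $\|\Delta\|^2$; a larger $\ell$ would let the adversary place $v$ aggressively enough (in expectation) to correlate with $-\Delta$ and flip the sign of the cross term by an amount comparable to $\|\Delta\|^2$, and a smaller $\ell$ would concede quality unnecessarily. I would also need to be careful that the $n^{-\Omega(1)}$ tail translates to negligibility in the relevant security parameter (typically by assuming $n$ grows with $\secpar$), and to verify that condition (2)'s threshold in $\WeightDetect$ is consistent with the small-norm regime.
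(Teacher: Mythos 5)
Your proposal is correct and follows essentially the same route as the paper's proof: the same decomposition of the detector score into $\|\Delta\|^2$ plus a cross term (the paper writes it as $(v-u)\cdot v$ with $v=\wstar-\wwat$, $u=z-\wwat$), the same two concentration tools (chi-squared concentration for $\|\Delta\|^2$ and a one-dimensional Gaussian tail for the cross term via spherical symmetry / conditional independence of the adversary's displacement), the same case split on $\|z-\wwat\|$ at the $\epsilon n/\sqrt{\log n}$ scale, and the same triangle-inequality step using $\|\wwat-\wstar\|=O(\epsilon\sqrt{n})$ to conclude the quality budget is violated in the large-norm regime. The calibration issue you flag at the boundary is handled in the paper only at the level of $o(\cdot)/\omega(\cdot)$ asymptotics, so your treatment is at the same level of rigor (and you are in fact more careful than the paper in noting that the detector's second condition must also be checked).
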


\begin{theorem} \label{thm:soundness-weights}
    $\WeightDetect$ is sound.
\end{theorem}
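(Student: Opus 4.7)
The plan is to fix arbitrary content $c \in \R^n$ and original content $x_0 \in \R^n$, set $v := c - x_0$, and bound, over the random draw $\Delta \gets \calN(0, \epsilon^2 I)$, the probability that both conditions of \Cref{alg:watermarking-det} hold simultaneously: (A) $v \cdot \Delta \geq \tau \epsilon^2 n$ and (B) $\norm{v - \Delta} \leq \tfrac{1}{2}\epsilon^2 n$. I would proceed by case analysis on $\norm{v}$: when $\norm{v}$ is small, condition (A) fails by a Gaussian tail bound on the projection $v \cdot \Delta$; when $\norm{v}$ is large, condition (B) fails because $\norm{\Delta}$ concentrates around $\epsilon\sqrt{n}$ and hence cannot be within $\tfrac{1}{2}\epsilon^2 n$ of $v$.

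In the small-norm case $\norm{v} \leq C\epsilon\sqrt{n}$ for a suitable constant $C$, observe that $v \cdot \Delta$ is a centered Gaussian of variance $\epsilon^2 \norm{v}^2 \leq C^2 \epsilon^4 n$. The standard Gaussian tail bound then gives $\Pr[v \cdot \Delta \geq \tau\epsilon^2 n] \leq \exp(-\tau^2 n / (2C^2))$, which is negligible under the standard convention that $n$ grows polynomially in $\secpar$. In the large-norm case $\norm{v} > C\epsilon\sqrt{n}$, condition (B) forces $\norm{\Delta} \geq \norm{v} - \tfrac{1}{2}\epsilon^2 n$ by the reverse triangle inequality; in the regime where $\epsilon\sqrt{n}$ is bounded, this is $\Omega(C\epsilon\sqrt{n})$. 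Since $\norm{\Delta}^2/\epsilon^2 \sim \chi^2_n$ concentrates tightly around $n$, a standard chi-square tail bound yields $\Pr[\norm{\Delta}^2 \geq \tfrac{1}{4}C^2 \epsilon^2 n] \leq \exp(-\Omega(n))$, again negligible for $C$ a sufficiently large absolute constant.

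The main obstacle will be selecting a single constant $C$ for which both bounds are simultaneously negligible while the two cases exhaust all of $\R^n$. I expect a careful look at the intermediate regime $\norm{v} \approx \epsilon\sqrt{n}$ to be required, where condition (A) pins the component of $\Delta$ along $v$ and condition (B) pins the overall location of $\Delta$; one verifies that these two constraints cannot simultaneously hold with non-negligible probability. Combining the cases by a union bound then yields $\Pr[\WeightDetect_\Delta(c, x_0) = \true] \leq \negl$, as required for soundness over any fixed pair $(c, x_0)$.
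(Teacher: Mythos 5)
Your small-norm case is fine, but the case decomposition has a genuine gap, and it sits exactly in the step you defer to ``a careful look at the intermediate regime'' --- except the problematic window is not $\norm{v}\approx\epsilon\sqrt{n}$ but the much wider range $C\epsilon\sqrt{n}\lesssim\norm{v}\lesssim\tfrac12\epsilon^2 n$, which is nonempty precisely in the regime of interest ($\epsilon\sqrt{n}\gg 1$; your large-norm step explicitly assumes the opposite, that $\epsilon\sqrt{n}$ is bounded, which is not part of the theorem). Take, say, $\norm{v}=\tfrac13\epsilon^2 n$ with $\epsilon$ constant and $n$ large. Then condition (B) holds with overwhelming probability, since $\norm{v-\Delta}\le\norm{v}+\norm{\Delta}\approx\tfrac13\epsilon^2 n+\epsilon\sqrt{n}\le\tfrac12\epsilon^2 n$; and condition (A) asks a centered Gaussian of standard deviation $\epsilon\norm{v}=\tfrac13\epsilon^3 n$ to exceed $\tau\epsilon^2 n$, a $3\tau/\epsilon = O(1)$-standard-deviation event, which has constant (non-negligible) probability. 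So in this window the two constraints \emph{are} simultaneously satisfiable with non-negligible probability, no choice of the constant $C$ makes your two tail bounds exhaust $\R^n$, and the verification you postpone cannot be carried out with the thresholds read literally.

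The paper's own proof takes a different route that avoids the case split entirely: it never uses the inner-product condition and argues only about the proximity condition, showing that for \emph{any} fixed $x'$ the distance to the watermarked content is, with overwhelming probability, on the order of $\norm{\Delta}\approx\epsilon\sqrt{n}$ and hence above the proximity threshold (which the paper's argument treats as commensurate with $\norm{\Delta}$, i.e.\ $\Theta(\epsilon\sqrt{n})$-scale, rather than the literal $\tfrac12\epsilon^2 n$ printed in \Cref{alg:watermarking-det}). Concretely, it expands $\norm{x'-\wat{x}}^2=\norm{\Delta}^2-2\Delta\cdot(x-x')+\norm{x-x'}^2$, controls the cross term by spherical symmetry ($\Delta\cdot(x-x')$ is distributed as $\Delta_1\norm{x-x'}$ with $\Delta_1\sim\calN(0,\epsilon^2)$), applies concentration of $\norm{\Delta}^2$ (\Cref{fact:gaussian-norm-tail}), and finishes with a short case analysis on $\norm{x-x'}$ showing the quadratic never dips below the threshold --- uniformly over all fixed $x'$, with no decomposition by $\norm{v}$ and no appeal to the inner-product test. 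To repair your argument you would either have to adopt that reading of the proximity threshold, in which case your large-norm case becomes the entire proof and the inner-product condition is unnecessary, or restrict to $\epsilon=O(1/\sqrt{n})$, which the theorem does not assume and which is far from the paper's experimental regime.
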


We defer the proofs of \Cref{thm:unremovability-weights,thm:soundness-weights} to \Cref{appendix:unremovability-weights-proof,appendix:soundness-weights-proof} respectively.

\subsection{Detecting the watermark from text} \label{sec:det-text}
In this section, we will show that the inner product from $\WeightDetect$ can be approximated given text.
Our detector from text, $\TextDetect$ (\Cref{alg:watermarking-det-text}), simply computes the sum of the perturbations of the biases of tokens in the given text.
We observe that a positive perturbation increases the probability of outputting the given token, and the magnitude of the perturbation corresponds to the magnitude of the increase.
Similarly, a negative perturbation decreases a token's likelihood.
Therefore, we expect the frequency of a token in a watermarked response to have an observable correlation with the sign and magnitude of the perturbation added to its bias.

We prove in \Cref{thm:unremovability-text} that this intuition is indeed correct.
In particular, this sum of perturbations of observed tokens approximates a value that is 0 in expectation for natural text, and grows linearly with the text length for watermarked text.
This is true even when the watermarked model is modified by an adversary attempting to remove the watermark.
The accuracy of the detector's approximation depends on the entropy of the text, and the quality of the model produced by the adversary.
That is, low-entropy responses and low-quality adversarial models will have lower watermark detectability.

We first introduce some notation and recall the structure of a language model.
Let $z$ be the biases of the model produced by an adversary modifying the watermarked model $\wwat$.
Let $p^i_t$ and $q^i_t$ be the probabilities that the models $z$ and $\wstar$, respectively, assign to token $t \in [n]$ at step $i$.
Let $\ell^i_r$ be the logit of token $r$ in step $i$, under the original model $\wstar$.
Therefore,
\begin{align*}
    p^i_t = \frac{e^{\ell^i_t + z_t}}{\sum_{r \in [n]} e^{\ell^i_r + z_r}} \text{ and }
    q^i_t = \frac{e^{\ell^i_t + \wstar_t}}{\sum_{r \in [n]} e^{\ell^i_r + \wstar_r}}
\end{align*}

Recall that language models compute their probability distribution using the softmax function. Therefore, no probabilities are actually zero but are rather extremely small values, which are practically zero.
We assume there is some threshold for which probabilities above that threshold are \emph{significant}, and probabilities below that threshold are functionally zero---in practice, those tokens are not sampled.
We treat these negligible probabilities as zero. We will use the following assumptions:

\begin{assumption}[Similar normalization] \label{ass:denoms}
$\sum_{r \in [n]} e^{\ell^i_r + z_r} \approx \sum_{r \in [n]} e^{\ell^i_r + \wstar_r}$.
\end{assumption}

\begin{assumption}[Derived from similar normalization, with concrete approximation error] \label{ass:approx}
    There exists a very small constant $\eta$ such that for every $t \in [n]$,
    \[
    p^i_t/q^i_t - 1 - \eta \leq (z_t - \wstar_t) \leq p^i_t/q^i_t - 1 + \eta.
    \]
\end{assumption}

\Cref{ass:denoms} states that the under the adversary's model and the original model, the normalization factors used in the softmax function are roughly equal.
Intuitively, for an attack that significantly changes this normalization factor, there is a similar attack that makes smaller renormalized changes.
For example, if the adversary increases all biases by a large amount, it might as well change them by a smaller amount with the same ratios between them.
Furthermore, the quality requirement bounds the adversary's changes $z_r$.
\Cref{ass:approx} is derived from \Cref{ass:denoms} and an additional approximation that $e^x \approx 1 + x$.
We make this approximation error concrete by introducing $\eta$, which we assume to be small.
We show this derivation in the proof of \Cref{lemma:expected-val}.

\begin{definition}[$c_1$-high entropy] \label{def:high-entropy}
Let $T \subseteq [n]$ be the subset of tokens over which $p^i$ has significant probability.
We say that $p^i$ has $c_1$-high entropy for $c_1 > 0$ if 
for all $t \in T$, 
$p^i_t \geq \frac{1}{c_1\abs{T}}$.
\end{definition}
In other words, \Cref{def:high-entropy} states that for tokens with significant probability under $p^i$, their probabilities are within a constant factor of $1/\abs{T}$.
That is, $p^i$ is somewhat close to uniform over $T$.

\begin{definition}[$c_2$-high quality] \label{def:high-qual}
Let $T \subseteq [n]$ be the subset of tokens over which $p^i$ has significant probability.
We say that $p^i$ has $c_2$-high quality for $c_2 > 0$ if 
for all $t \in T$, 
$q^i_t \geq {p^i_t}{c_2}$.
\end{definition}
$c_2$-high quality states that the probabilities under the original model are within a constant factor of the probabilities under the adversary's model.
This is true if the adversary makes bounded changes---recall that $p^i_t \approx q^i_t \cdot e^{(z_t - \wstar_t)}$.
Therefore, it is true if $(z_t - \wstar_t) \leq - \log c_2$.

Observe that $c_1$-high entropy and $c_2$-high quality together imply the following fact:
\begin{fact} \label{fact:high-entropy-quality}
    Let $p^i$ be $c_1$-high entropy and $c_2$-high quality, and let $T$ be the set of tokens with non-negligible  probability under $p^i$. Then for all $t \in T$, 
    $1/q^i_t \leq \frac{c_1 \abs{T}}{c_2}$.
\end{fact}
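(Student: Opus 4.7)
The plan is to prove the fact by chaining the two hypotheses directly, since both definitions provide pointwise lower bounds on relevant quantities that compose naturally.

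First I would fix an arbitrary $t \in T$. By $c_2$-high quality (\Cref{def:high-qual}) applied to this $t$, we have $q^i_t \geq p^i_t \cdot c_2$, which immediately gives the bound $1/q^i_t \leq 1/(c_2 \cdot p^i_t)$. Next, by $c_1$-high entropy (\Cref{def:high-entropy}) applied to the same $t$, we have $p^i_t \geq 1/(c_1 \abs{T})$, equivalently $1/p^i_t \leq c_1 \abs{T}$. Substituting into the previous inequality yields $1/q^i_t \leq c_1 \abs{T}/c_2$, which is the claimed bound. Since $t \in T$ was arbitrary, this holds for all $t \in T$.

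There is no real obstacle here: the fact is essentially a definition-chasing compositional statement, and its role in the paper is presumably to provide a convenient bound that will be invoked later (e.g., to control error terms in the analysis of $\TextDetect$). The only thing worth flagging is that one should verify the inequality $q^i_t \geq p^i_t \cdot c_2$ is read in the correct direction — the definition states $q^i_t \geq p^i_t c_2$, so dividing by $q^i_t \cdot p^i_t \cdot c_2 > 0$ preserves the direction of the inequality and gives $1/(p^i_t c_2) \geq 1/q^i_t$ as needed.
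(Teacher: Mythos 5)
Your proof is correct and is exactly the argument the paper intends (the paper states the fact as an immediate observation from \Cref{def:high-entropy,def:high-qual} without spelling it out): combining $q^i_t \geq c_2 p^i_t$ with $p^i_t \geq 1/(c_1\abs{T})$ gives $q^i_t \geq c_2/(c_1\abs{T})$, i.e.\ $1/q^i_t \leq c_1\abs{T}/c_2$. No gaps.
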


\begin{lemma} \label{lemma:expected-val}
    Let $x_1, \ldots, x_k$ be any sequence of tokens generated using a model produced by the adversary. For all $i \in [k]$ with $c_1$-high entropy (\Cref{def:high-entropy}) and $c_2$-high quality (\Cref{def:high-qual}), under reasonable assumptions (\Cref{ass:denoms,ass:approx}), we have:

    \[
        \E [(\wwat - \wstar)_{x_i}] \geq c_1 \epsilon^2/c_2 - \alpha,
    \]
    where $\alpha = \frac{c_2 \eta \epsilon \sqrt{2/\pi}}{c_1}$ is a very small approximation error term.
\end{lemma}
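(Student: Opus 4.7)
The plan is to unfold the expectation over $x_i \sim p^i$, convert the resulting sum into a chi-squared divergence via \Cref{ass:approx}, and then lower bound that divergence using the high-entropy and high-quality assumptions (packaged in \Cref{fact:high-entropy-quality}) together with the Gaussian structure of the watermark perturbation $\Delta := \wwat - \wstar$.

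First, I would write $\E_{x_i \sim p^i}[(\wwat - \wstar)_{x_i}] = \sum_{t \in T} p^i_t \Delta_t$, where $T$ is the set of tokens with significant probability at step $i$. Applying the lower-bound half of \Cref{ass:approx}, and treating any residual adversary perturbation $z - \wwat$ as absorbed into $\eta$ (as permitted by the quality condition together with the similar-normalization \Cref{ass:denoms}), I substitute $\Delta_t \geq p^i_t/q^i_t - 1 - \eta$ inside the sum. The main term that emerges is $\sum_{t \in T} p^i_t (p^i_t/q^i_t - 1) = \sum_{t \in T} (p^i_t)^2/q^i_t - 1 = \chi^2(p^i \| q^i)$.

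The heart of the argument is to turn this chi-squared divergence into a concrete positive quantity of the right order. A Cauchy--Schwarz estimate alone only yields $\chi^2 \geq 0$, so I would instead revisit the Taylor expansion that derives \Cref{ass:approx} from \Cref{ass:denoms}: $p^i_t - q^i_t \approx q^i_t (z_t - \wstar_t) \approx q^i_t \Delta_t$. Substituting into $\chi^2(p^i \| q^i) = \sum_t (p^i_t - q^i_t)^2/q^i_t$ gives $\chi^2 \approx \sum_t q^i_t \Delta_t^2$. Then \Cref{fact:high-entropy-quality} provides $q^i_t \geq c_2/(c_1 |T|)$ on $T$, so $\sum_{t \in T} q^i_t \Delta_t^2 \geq \frac{c_2}{c_1 |T|} \sum_{t \in T} \Delta_t^2$, and the i.i.d.\ $\calN(0, \eps^2)$ structure of $\Delta$ gives $\E \sum_{t \in T} \Delta_t^2 = |T| \eps^2$, producing the $\eps^2$ main term with the claimed $c_1, c_2$ ratio.

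The remaining bookkeeping assembles $\alpha$: the $\eta$-error from each use of \Cref{ass:approx} combines with the expected perturbation size $\E |\Delta_t| = \eps \sqrt{2/\pi}$ and the $c_2/c_1$ weight from \Cref{fact:high-entropy-quality} to give exactly $\alpha = c_2 \eta \eps \sqrt{2/\pi}/c_1$. The main obstacle is the chi-squared step: lower bounds on $\chi^2(p^i \| q^i)$ are delicate because they require quantifying how far $p^i$ actually deviates from $q^i$, which in our setting can only be done by threading the Gaussian magnitudes of $\Delta$ through the linearization behind \Cref{ass:approx}. Getting the constants to match cleanly---rather than merely proving some positive lower bound---is where the calculation will require the most care.
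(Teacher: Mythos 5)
There is a genuine gap, and it sits at the very first substitution. \Cref{ass:approx} relates $p^i_t/q^i_t - 1$ to $(z_t - \wstar_t)$, where $z$ is the \emph{adversary's} model that actually generates the text, not to $(\wwat - \wstar)_t$. Your plan plugs $ (\wwat-\wstar)_t \geq p^i_t/q^i_t - 1 - \eta$ into $\sum_t p^i_t (\wwat-\wstar)_t$, justifying this by ``absorbing $z - \wwat$ into $\eta$.'' That absorption is not available: $\eta$ is a very small linearization error, while the $c_2$-high-quality condition only forces the adversary's bias changes to be of constant size (roughly $\ln(1/c_2)$ per coordinate), and the entire point of the lemma is to cover such adversarially modified models. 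With $z$ silently replaced by $\wwat$, your argument proves the statement only for the unmodified watermarked model. The same conflation infects the main term: your $\chi^2(p^i\|q^i) \approx \sum_t q^i_t (z_t - \wstar_t)^2$ measures the \emph{combined} watermark-plus-adversary perturbation, and lower bounding it by the watermark's own energy $\sum_t q^i_t (\wwat - \wstar)_t^2$ is exactly what a removal adversary tries to prevent (by pushing $z$ back toward $\wstar$); no argument is given for why it cannot. As an aside, you correctly observed that $\chi^2 \ge 0$ alone is useless here, and your final constant $c_2\epsilon^2/c_1$ matches what the paper's proof actually derives, so those parts are fine.

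The paper's proof resolves precisely the issue you skipped. It keeps $z$ explicit throughout: using \Cref{ass:approx} it rewrites $\E_{r \sim p^i|_T}[(\wwat-\wstar)_r]$, via \Cref{fact:high-entropy-quality}, as at least $\tfrac{c_2}{c_1\abs{T}}$ times the inner product $(z - \wstar)|_T \cdot (\wwat - \wstar)|_T$ plus small correction terms, and only then takes the expectation over the adversary's posterior $\wstar \sim \calN(\wwat, \epsilon^2 I)$. Writing $v = \wstar - \wwat$ and $u = z - \wwat$, the inner product becomes $\norm{v|_T}^2 - u|_T \cdot v|_T$; the cross term has mean zero because $u$ is fixed given $\wwat$ while $v$ is a fresh Gaussian from the adversary's point of view, and $\E\norm{v|_T}^2 = \abs{T}\epsilon^2$ supplies the main term. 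That zero-mean cross term is the mechanism that neutralizes the adversary's perturbation, and it has no counterpart in your proposal; without it (or some substitute argument about why the adversary cannot cancel the watermark's contribution), the chi-squared route does not establish the lemma.
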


We defer the proof of \Cref{lemma:expected-val} to \Cref{appendix:expected-val-proof}.

\begin{assumption}[Independence] \label{ass:independence}
    Consider a response output by the adversary's model, and let $x_1, \ldots, x_k$ be a subsequence of distinct tokens in this response.
    The random variables $(\wwat - \wstar)_{x_i}$ are independent.
\end{assumption}
This assumption is concerned with a degenerate case where the fact that a model output tokens with positive perturbation (e.g., $(\wwat - \wstar)_{x_i} > 0$) makes it less likely to output positive-perturbation tokens in the future.
This is similar to an assumption used in \cite{kgw} that the green list is freshly drawn for each token in the response; and an assumption used in \cite{zalw} called ``homophily''.

Our main theorem states that if an adversary is given a watermarked model and it arbitrarily modifies its biases, the resulting model still produces watermarked text with overwhelming probability.
This probability increases as $\epsilon$ and the number of distinct tokens increase.
\begin{theorem} \label{thm:unremovability-text}
    $\TextDetect$ detects the watermark given an adversarially produced text with enough distinct tokens, enough entropy, and high enough quality.

More precisely,
let $I$ be the $n \times n$ identity matrix, and let $\calC$ be such that the adversary's posterior distribution over the original content $x$ after seeing $\wat{x}$ is $\calN(0, \epsilon^2 I)$. 
    
Let the adversary produce a model and a response generated by that model such that: 
\begin{itemize}
    \item The response contains at least $\secpar$ distinct tokens.
    \item At least a constant $\delta > 0$ fraction of the distributions $p^i$ of these distinct tokens are $c_1$-high entropy and $c_2$-high quality, for constants $c_1, c_2 > 0$.
    We let $\alpha = \frac{c_2 \eta \epsilon \sqrt{2/\pi}}{c_1}$ be the corresponding approximation error term.
\end{itemize}

Under reasonable assumptions (\Cref{ass:denoms,ass:approx,ass:independence}), $\TextDetect$ with threshold $\ttext = \delta c_2/2c_1 - \alpha/2$ outputs $\true$ with overwhelming probability.
\end{theorem}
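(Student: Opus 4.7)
The plan is to lower-bound $\E[\counttt]$ via \Cref{lemma:expected-val} and then drive $\counttt$ close to its mean using a Gaussian concentration bound. Let $S$ denote the set of distinct tokens accumulated by $\TextDetect$, with $|S|\ge\secpar$ by hypothesis, and let $G\subseteq S$ index those tokens whose generating distribution $p^i$ satisfies both the $c_1$-high-entropy and $c_2$-high-quality conditions, so that $|G|\ge\delta|S|$. Writing $\Delta_t := (\wwat-\wstar)_t$, the quantity $\counttt = \sum_{t\in S}\Delta_t$ is exactly the running sum maintained by \Cref{alg:watermarking-det-text}. By \Cref{lemma:expected-val}, each good token contributes $\E[\Delta_t]\ge c_1\epsilon^2/c_2 - \alpha$, so
\[
\E[\counttt] \;\ge\; \delta|S|\bigl(c_1\epsilon^2/c_2 - \alpha\bigr) \;+\; \E\!\Bigl[\sum_{t\in S\setminus G}\Delta_t\Bigr].
\]

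Next I would invoke \Cref{ass:independence} to treat $\{\Delta_t\}_{t\in S}$ as independent. Since $\Delta\sim\calN(0,\epsilon^2 I)$, each $\Delta_t$ is $\epsilon$-sub-Gaussian (a deterministic mean shift does not affect the sub-Gaussian parameter), and the sub-Gaussian Hoeffding bound yields
\[
\Pr\!\bigl[\counttt < \E[\counttt] - s\bigr] \;\le\; \exp\!\bigl(-s^2/(2|S|\epsilon^2)\bigr).
\]
The detection threshold $|S|\epsilon^2\ttext$ is calibrated to sit roughly halfway between $0$ and the expected contribution from $G$, so the gap $\E[\counttt] - |S|\epsilon^2\ttext$ is of order $|S|\epsilon^2$ (after unpacking the constants; here I read the stated $\ttext$ as having $c_1/c_2$ in place of $c_2/c_1$, which appears to be a transposition in the theorem statement). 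Taking $s$ equal to that gap gives a failure probability of $\exp(-\Omega(|S|))$, which is negligible because $|S|\ge\secpar$, and therefore $\TextDetect$ outputs $\true$ with overwhelming probability.

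The main obstacle I expect is controlling the contribution of tokens in $S\setminus G$, for which \Cref{lemma:expected-val} offers no direct lower bound. I would address this by extracting from \Cref{ass:denoms,ass:approx} a one-sided estimate of the form $\E[\Delta_t]\ge -O(\alpha)$ valid without the entropy/quality hypotheses --- essentially a weaker version of \Cref{lemma:expected-val}'s calculation, in which the positive correlation between bias perturbations and sampling probabilities still forbids a large negative conditional expectation --- so that the bad-token mass is absorbed into the $\alpha$-budget already built into $\ttext$. A secondary technical point is that $S$ and $G$ are themselves random sets determined by the realized text (which depends on $\Delta$); I would handle this by conditioning on the transcript $(p^1,\ldots,p^k)$ of per-step distributions and on the sampled token sequence, applying the concentration bound inside the conditioning, and then integrating out the outer randomness.
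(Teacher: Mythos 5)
Your proposal follows essentially the same route as the paper's proof: lower-bound the expectation of $\counttt$ via \Cref{lemma:expected-val} applied to the $\delta$-fraction of good tokens, handle the remaining tokens with a (near-)nonnegativity bound on their expected perturbation, and then invoke \Cref{ass:independence} plus a Hoeffding-type concentration bound, concluding because the number of distinct tokens is at least $\secpar$. The differences are cosmetic: the paper disposes of the non-good tokens by simply asserting $\E[(\wwat-\wstar)_{x_i}]\ge 0$ for any adversarial $p^i$, it uses truncation of each $(\wwat-\wstar)_{x_i}$ to $[-\secpar^{1/4},\secpar^{1/4}]$ followed by bounded-range Hoeffding rather than your sub-Gaussian tail bound, and the $c_1/c_2$ transposition you flagged actually sits in the statement of \Cref{lemma:expected-val} (whose proof derives $c_2\epsilon^2/c_1-\alpha$, consistent with the theorem's threshold $\ttext=\delta c_2/2c_1-\alpha/2$) rather than in the theorem.
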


\begin{theorem} \label{thm:soundness-text}
    For any constant threshold $\ttext >0$, $\TextDetect$ is sound.
\end{theorem}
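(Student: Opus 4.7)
\textbf{Proof plan for \Cref{thm:soundness-text}.} Fix an arbitrary text $x_1, \ldots, x_\ell \in \calT$. Since the text is fixed independently of $\Delta$, all randomness comes from $\Delta \gets \calN(0, \epsilon^2 I)$. If the text has fewer than $\secpar$ distinct tokens, then $\TextDetect$ never reaches the threshold check and returns $\false$ with probability $1$. So assume otherwise, and let $t_1, t_2, \ldots, t_k$ be the distinct tokens of the text listed in the order of first appearance, with $\secpar \leq k \leq n$. The detector outputs $\true$ if and only if there is some $j \in \{\secpar, \secpar+1, \ldots, k\}$ at which
\[
    \sum_{i=1}^{j} \Delta(t_i) \;\geq\; j \, \epsilon^2 \, \ttext.
\]

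Because the coordinates of $\Delta$ are i.i.d. $\calN(0, \epsilon^2)$ and the $t_i$'s are \emph{distinct} token indices, the random variables $\Delta(t_1), \ldots, \Delta(t_k)$ are independent $\calN(0, \epsilon^2)$. Thus for each $j$, the partial sum $S_j := \sum_{i=1}^{j}\Delta(t_i)$ is distributed as $\calN(0, j\epsilon^2)$. Applying the standard Gaussian tail bound $\Pr[\calN(0,\sigma^2) \geq u] \leq \exp(-u^2/(2\sigma^2))$ with $u = j\epsilon^2 \ttext$ and $\sigma^2 = j\epsilon^2$ gives
\[
    \Pr\bigl[S_j \geq j\epsilon^2 \ttext\bigr] \;\leq\; \exp\!\left(-\,\tfrac{j\,\epsilon^2\, \ttext^{\,2}}{2}\right).
\]

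A union bound over the (at most $k \leq n$) possible stopping indices yields
\[
    \Pr\bigl[\TextDetect_\Delta(x_1,\ldots,x_\ell) = \true\bigr] \;\leq\; \sum_{j=\secpar}^{k} \exp\!\left(-\,\tfrac{j\,\epsilon^2\, \ttext^{\,2}}{2}\right) \;\leq\; \frac{\exp(-\secpar\,\epsilon^2\, \ttext^{\,2}/2)}{1 - \exp(-\epsilon^2 \ttext^{\,2}/2)},
\]
where the last bound uses the geometric series. Since $\ttext$ is a positive constant and $\epsilon$ is a scheme parameter bounded below by a constant, this quantity is $\exp(-\Omega(\secpar))$, hence negligible in $\secpar$, which completes the argument.

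The proof is essentially a one-line tail bound, so there is no real obstacle; the only subtlety to keep track of is the independence of the summands, which is exactly why $\TextDetect$ sums over \emph{distinct} tokens (via the set $S$) rather than over every token occurrence — this ensures each $\Delta(t_i)$ is a fresh coordinate of the Gaussian vector, so the partial sums behave as a Gaussian random walk amenable to a clean tail bound and a union bound over stopping times.
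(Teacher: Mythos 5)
Your proposal is correct and follows essentially the same route as the paper's proof: both observe that the partial sums over \emph{distinct} tokens are sums of independent $\calN(0,\epsilon^2)$ coordinates of $\Delta$, apply a Gaussian tail bound at each stopping index $i \geq \secpar$, and finish with a union bound over the possible indices. Your version is marginally tidier (explicit handling of the short-text case and a geometric-series bound in place of the paper's factor-of-$k$ union bound), but these are cosmetic refinements rather than a different argument.
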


We defer the proofs of \Cref{thm:unremovability-text,thm:soundness-text} to \Cref{appendix:unremovability-text-proof,appendix:soundness-text-proof} respectively.

\section{Experimental Evaluation} \label{sec:experiments}
In this section we present the results of an experimental evaluation of our watermarking construction which demonstrate its behavior under concrete parameter settings. 
Our experiments aim to show \emph{feasibility} rather than optimality, as we open the door to further open-source watermarking work with
experimental findings consistent with our theoretical results.

\heading{Experimental Setup}
We watermark two LLMs: the OPT model with 1.3B and 6.7B parameters. 
We generate responses of three types: story, essay, and code; we include these prompts and additional prompt parameters in \Cref{appendix:experiments}.
In all of our plots, we compute the fraction of responses detected among those with at least 20 distinct tokens, for a range of $\epsilon$ parameter values.
Recall that $\epsilon$ is the standard deviation of the Gaussian perturbations added to the biases.
Therefore, the detectability of the watermark increases with epsilon.

\begin{figure}
    \centering
    \includegraphics[height=200pt]{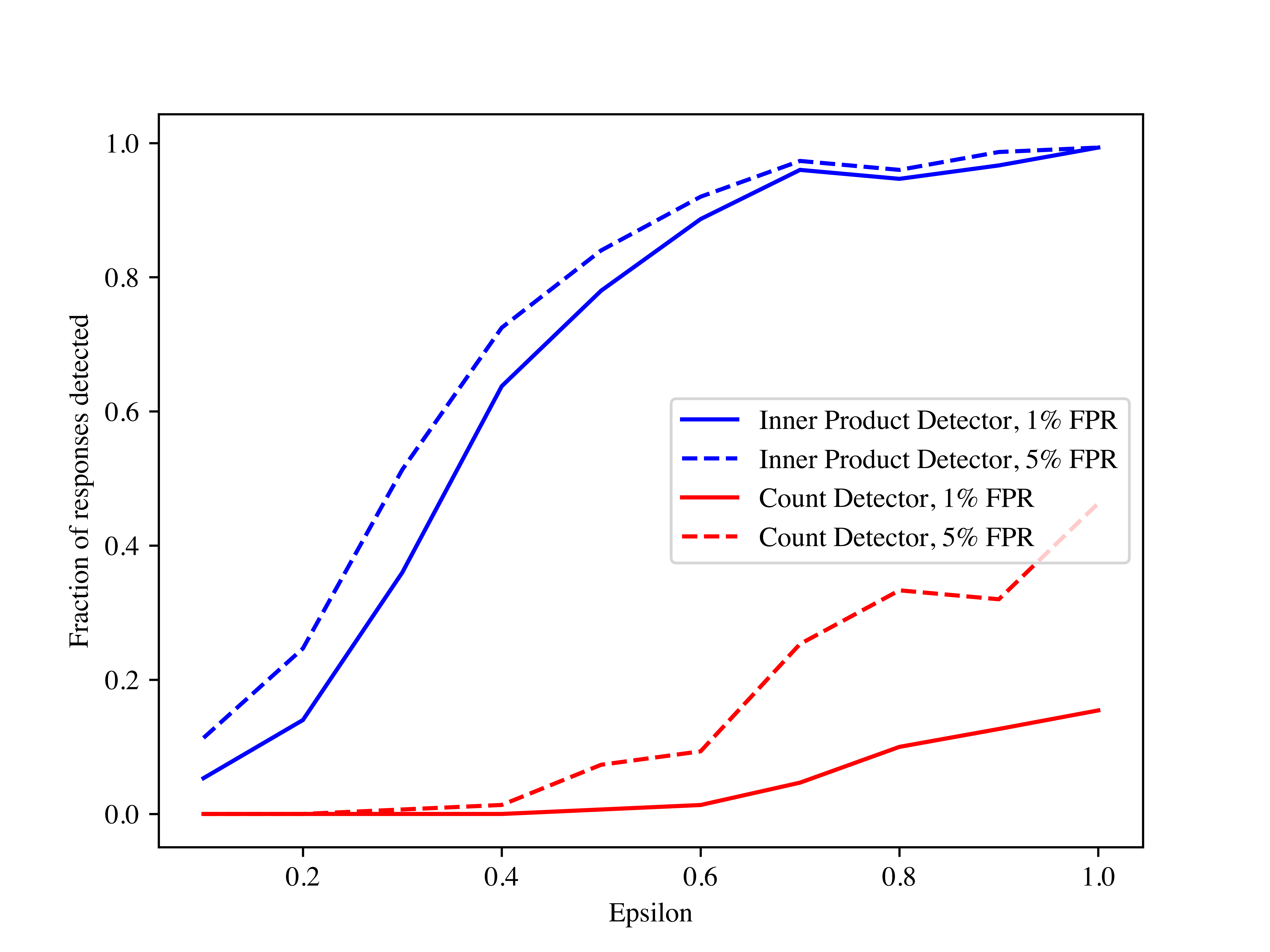}
    \caption{True positive detection rates for responses generated by OPT-1.3B, with our watermark applied under varying perturbation magnitudes (epsilon).  The ``Inner Product Detector'' is our $\TextDetect$ detector (\Cref{alg:watermarking-det-text}), and the ``Count Detector'' is a baseline detector that we compare to. \label{fig:detectability}}
\end{figure}

\heading{Detectability}
In \Cref{fig:detectability}, we plot the fraction of watermarked responses that are detected at 1\% and 5\% false positive rates.
These responses are output by a model watermarked using our scheme, without further modification.
The curves for ``Inner Product Detector'' show detectability under our detector $\TextDetect$ from \Cref{alg:watermarking-det-text}.
Observe that for $\epsilon \geq 0.5$, we have a true positive rate of at least 80\%.
Furthermore, this plot supports our result (\Cref{thm:unremovability-text}) that the true positive rate increases with $\epsilon$. We emphasize that although our detectability is not as strong as some inference-time watermarks, ours is the only \emph{open-source} watermark with a provable unremovability guarantee.

We also plot detectability for an alternate detector (``Count Detector''). 
This alternate detector is analogous to that of \cite{kgw,zalw} in that it computes the fraction of tokens whose perturbations were positive (e.g., green list tokens).
Observe that this detector has a significantly lower true positive rate than of $\TextDetect$ for our scheme.
This shows that the open-source setting requires new techniques for reliable detection, and that existing inference-time detectors do not carry over even when implementable in the weights of the model.
This supports our strategy of considering the magnitudes of the perturbations during detection, in our Inner Product Detector. 

\begin{figure}
    \centering
    \begin{subfigure}{.5\textwidth}
      \centering
      \includegraphics[width=.95\linewidth]{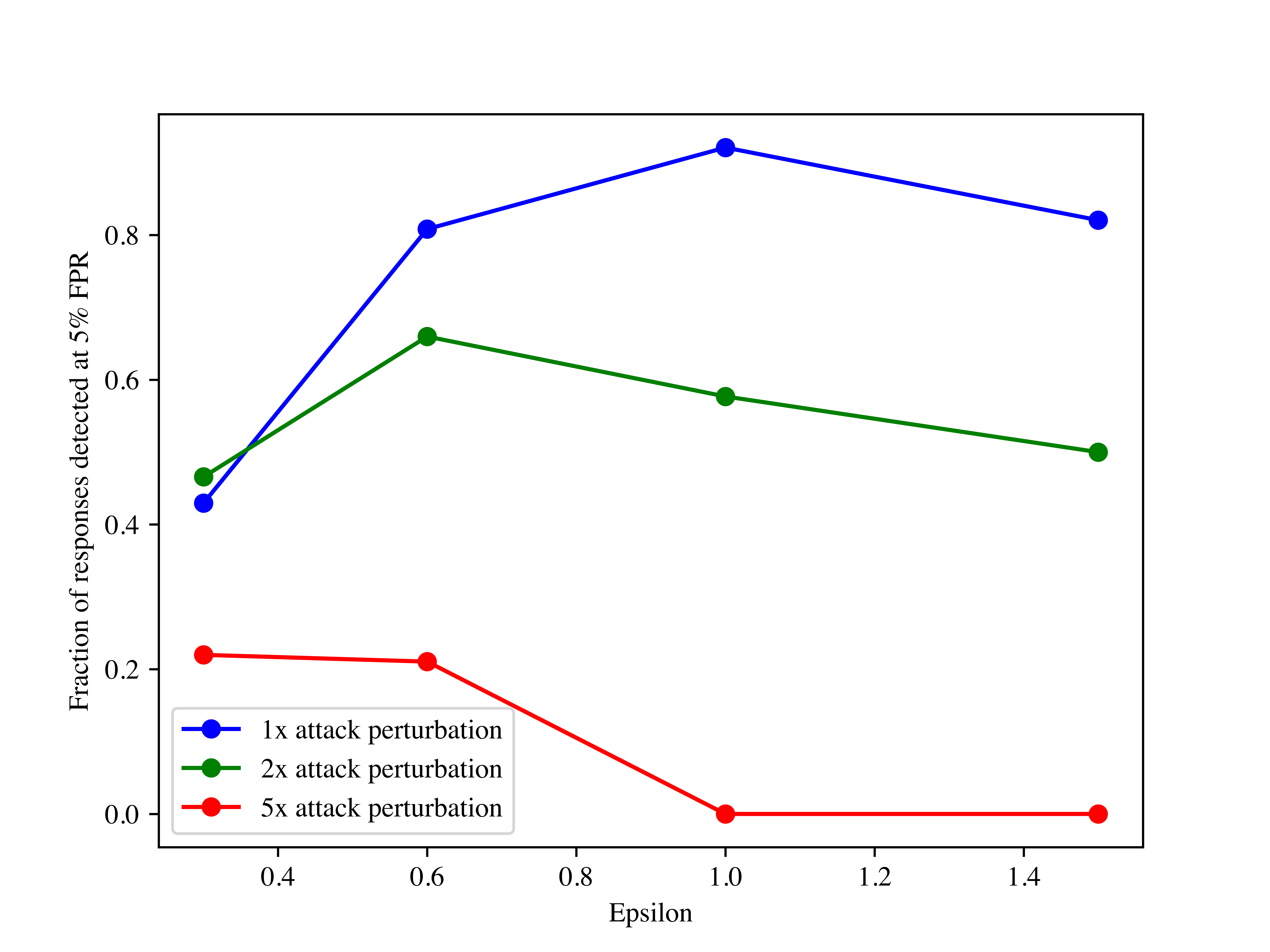}
      \caption{\label{fig:unremov-detect}}
    \end{subfigure}%
    \begin{subfigure}{.5\textwidth}
      \centering
      \includegraphics[width=.95\linewidth]{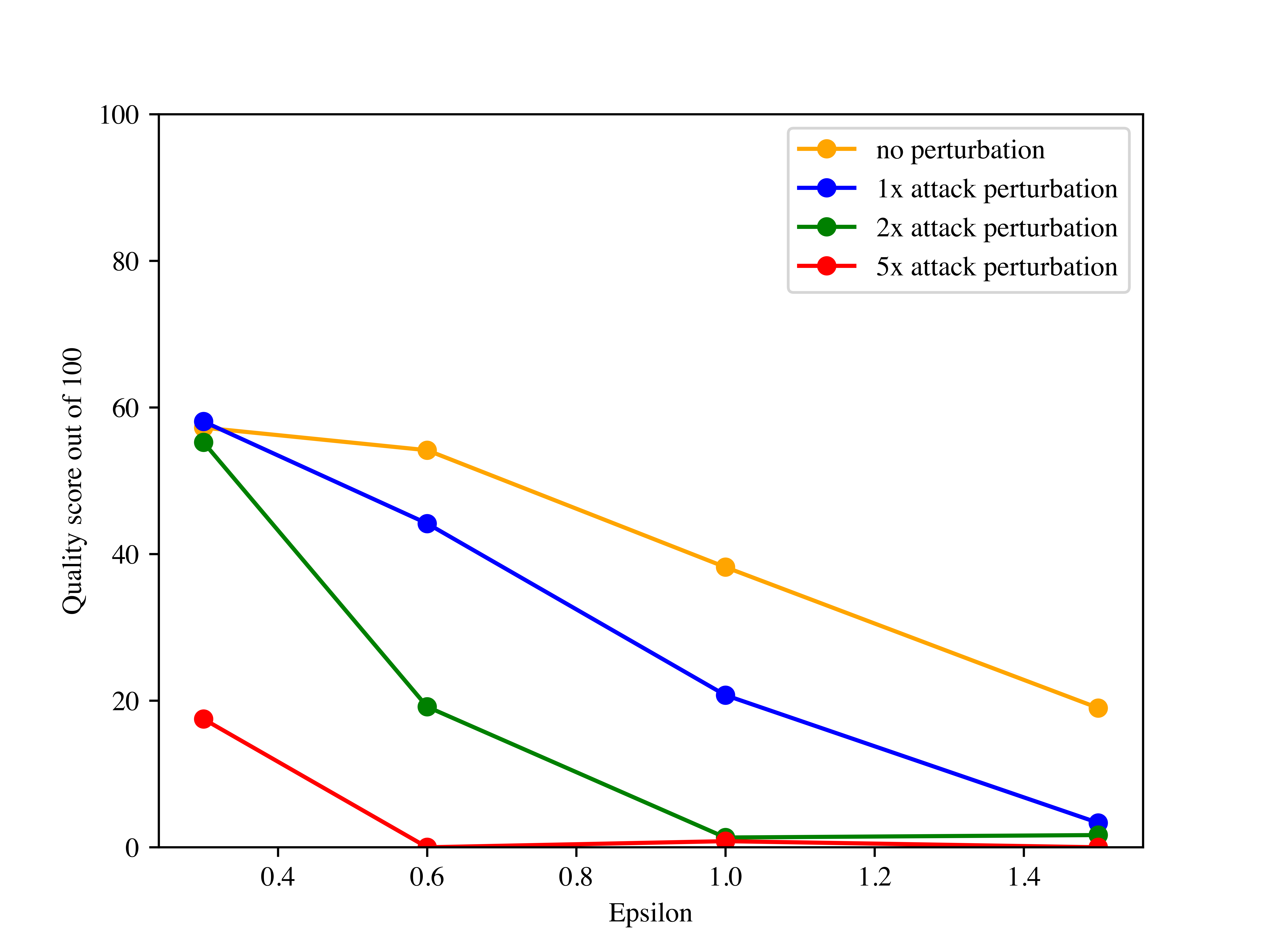}
      \caption{\label{fig:unremov-quality}}
    \end{subfigure}
    \caption{(a): Detection rates for adversarially perturbed watermarked OPT-6.7B models, to simulate a removal attack. The x-axis plots the epsilon parameter used in the watermark. The three curves show detection rates for models obtained by adding additional Gaussian noise to the biases, of standard deviaton 1, 2, and 5 times epsilon.\\
    (b): Quality scores of watermarked texts generated with various parameters of epsilon, under the same perturbation attacks. The quality score of unwatermarked text was 59.933. \label{fig:unremovability}}
\end{figure}

\heading{Unremovability}
While we prove the unremovability properties of our watermark in \Cref{appendix:unremovability-text-proof}, in this section we consider a concrete attack adversary and show the unremovability-quality trade-off for it. The adversary produces a new model by adding Gaussian perturbations of mean zero and standard deviation 1x epsilon, 2x epsilon, or 5x epsilon.
In \Cref{fig:unremovability}, we plot detection rates and quality scores for responses generated by adversarially modified watermarked models.
That is, a watermarked model is produced at the given parameter epsilon shown on the x-axis.

In \Cref{fig:unremov-detect}, we observe that in the 1x attack, for $\epsilon \geq 0.6$ the detection rate remains above 80\%.
The watermark fails to persist in the larger-perturbation attacks in part because for large perturbations, the model produces highly repetitive responses, yielding few distinct tokens and a weak watermark signal.
One can see this reflected in the quality scores shown in \Cref{fig:unremov-quality}.
The 5x attack, shown in red, has nearly zero quality for values of epsilon at least 0.6.
The 2x attack, reaches zero quality at epsilon=1, at which point the watermark is still detectable with at least 50\% probability.

\begin{figure}
    \centering
    \includegraphics[width=200pt]{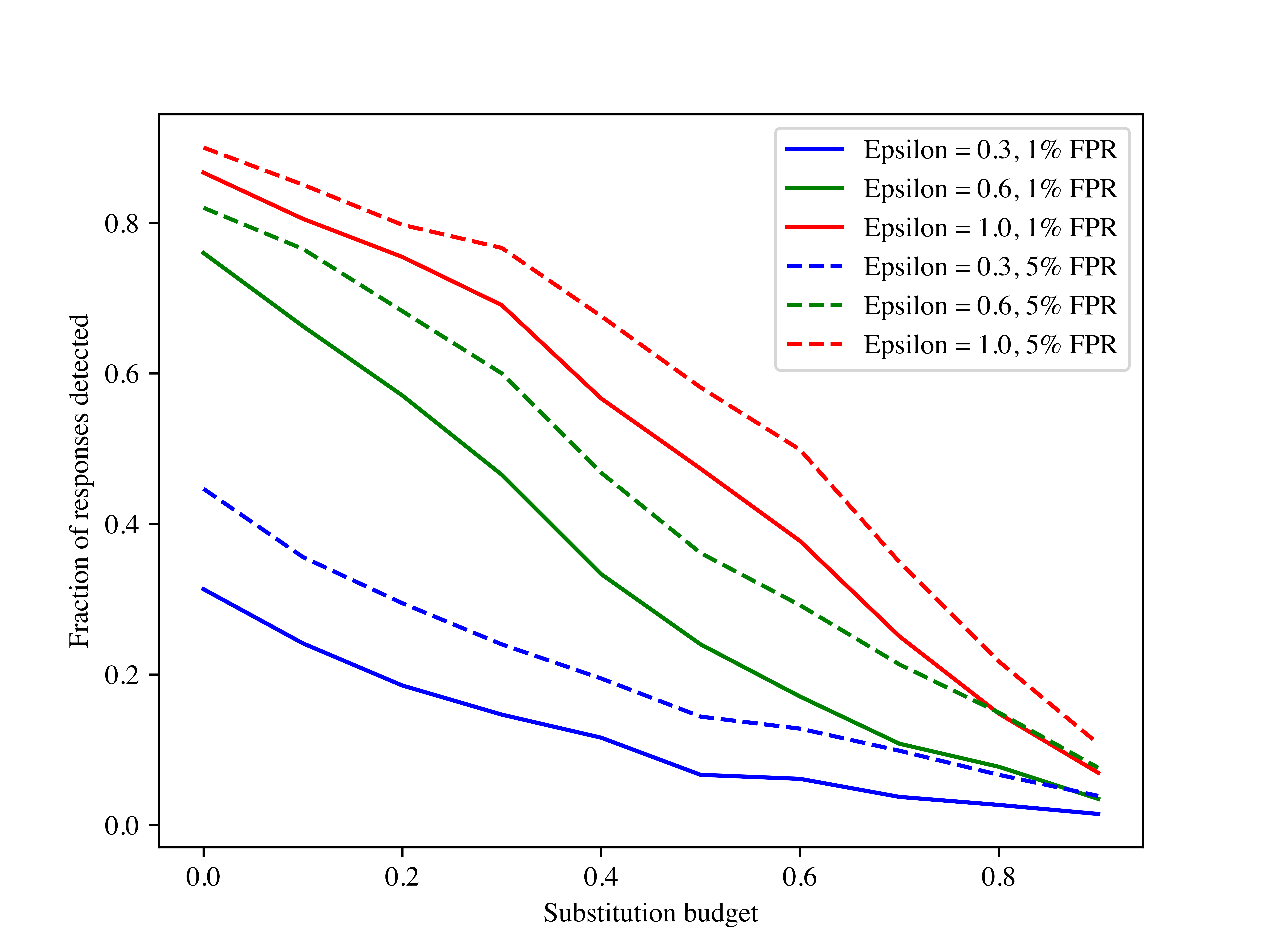}
    \caption{Detection rates for texts produced by a watermarked OPT-6.7B model and subjected to a substitution attack. The x-axis plots the fraction of tokens that are substituted. The curves show detection rates for varying epsilon parameters of the watermark. \label{fig:substitution}}
\end{figure}

In \Cref{fig:substitution}, we show detection rates for watermarked responses subjected to a token substitution attack, as considered in \cite{kthl}.
We reproduce their attack, choosing a random subset of tokens in the response to substitute with uniform tokens from the token alphabet.
We observe that our watermark tolerates moderate substitution attacks.

\heading{Additional discussion}
Our observed tradeoffs were affected by the phenomenon that large perturbations caused the model's responses to be very repetitive, therefore containing few distinct tokens.
While higher values of epsilon (e.g., larger perturbations) increase the strength of our watermark, fewer distinct tokens \emph{decrease} the strength of our watermark.
This can be seen especially under the unremovability experiment where the adversary perturbs the model at 5x epsilon \Cref{fig:unremovability}. 
Here, the adversary's large perturbations caused the model to deteriorate so profoundly that most responses contained fewer than 20 distinct tokens, resulting in detection failure.

While our watermark is detectable in practice, and it exhibits an unremovability-quality tradeoff, the strong asymptotic tradeoff in \Cref{sec:watermark} is somewhat dampened by the concrete parameters we consider.
For example, the false negative rate is exponential in $-\epsilon^4 T$, where $T$ is the number of distinct tokens.
Although asymptotically this expression is negligible in the number of distinct tokens, for concrete values of $\epsilon$ this term dominates.
For example, suppose that the watermark was applied with $\epsilon = 0.3$. We show that the false negative rate is at most $\exp(-(0.3)^4 T)$, where $T$ is the number of distinct tokens; to get any meaningful false positive guarantee we must have $T \geq (0.3)^{-4} \approx 125$.
The vast majority of responses generated in our experiments had fewer than 100 distinct tokens, because we aimed to measure the detectability of paragraph-length texts of $\sim300$ tokens. 

\heading{Quality}
In \Cref{fig:quality}, we plot quality scores for text generated by OPT-1.3B, using our watermark with various parameters of epsilon.
We compute these quality scores by asking Mistral-Instruct-7B to provide a score out of 100.
We report the average scores across 60 responses per value of epsilon, including responses of the three types we consider (essay, story, and code). 
\emph{Unwatermarked} responses had an average quality score of 59.933.

\subsubsection*{Acknowledgments}
Sam Gunn was supported by a Google PhD fellowship.
Miranda Christ and Tal Malkin were supported by a Google CyberNYC grant, an Amazon Research Award, and an NSF grant CCF-2312242.
Miranda Christ was additionally supported by NSF grants CCF-2107187 and CCF-2212233.
This work was done while Miranda Christ was a student researcher at Google.
Any opinions, findings, and conclusions or recommendations expressed in this material are those of the authors.

\bibliographystyle{iclr2025_conference}
\bibliography{sources}

\appendix
\section{Appendix}
\subsection{Concentration bounds}
Here, we use a standard Gaussian vector of dimension $n$ to mean a vector that whose components are independent Gaussians with unit variance. 
\begin{fact}[\href{https://cims.nyu.edu/~cfgranda/pages/OBDA_fall17/notes/randomness.pdf}{Theorem 2.7}] \label{fact:gaussian-norm-tail}
    Let $\vec{x}$ be an iid standard Gaussian random vector of dimension $n$. For any $c \in (0,1)$, we have
    \[ 
    \Pr\left[n(1-c) < \norm{\vec{x}}_2^2 < n(1+c)\right] \geq 1 - 2\exp \left(-\frac{n c^2}{8} \right).
    \]
\end{fact}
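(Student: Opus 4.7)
The plan is to prove this as a standard Chernoff bound on the squared norm of a Gaussian vector, treating it as a chi-squared random variable. Observe first that $\norm{\vec{x}}_2^2 = \sum_{i=1}^n x_i^2$ is a sum of $n$ independent chi-squared random variables with one degree of freedom each, and therefore has moment generating function
\[
\E[\exp(t \norm{\vec{x}}_2^2)] = (1 - 2t)^{-n/2}
\]
for all $t < 1/2$. The two-sided bound will come from bounding each tail separately and applying a union bound, which accounts for the factor of $2$ in the statement.

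For the upper tail, I would apply Markov's inequality to $\exp(t\norm{\vec{x}}_2^2)$, yielding
\[
\Pr[\norm{\vec{x}}_2^2 \geq n(1+c)] \leq (1-2t)^{-n/2} \exp(-tn(1+c)).
\]
The natural choice is $t = c/(2(1+c))$, which makes $1-2t = 1/(1+c)$ and reduces the bound to $\exp\bigl(\tfrac{n}{2}(\ln(1+c) - c)\bigr)$. For the lower tail, I would instead apply Markov to $\exp(-t\norm{\vec{x}}_2^2)$ with $t > 0$, use the MGF value $(1+2t)^{-n/2}$, and optimize with $t = c/(2(1-c))$, which gives the analogous bound $\exp\bigl(\tfrac{n}{2}(\ln(1-c) + c)\bigr)$.

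The remaining step is the elementary real-analytic inequalities $\ln(1+c) - c \leq -c^2/4$ and $\ln(1-c) + c \leq -c^2/2$ for $c \in (0,1)$. The second follows immediately from the Taylor series of $\ln(1-c) = -\sum_{k \geq 1} c^k/k$. For the first, one checks that $f(c) := \ln(1+c) - c + c^2/4$ has $f(0)=0$ and derivative $f'(c) = c(c-1)/(2(1+c)) \leq 0$ on $[0,1]$, so $f(c) \leq 0$ there. Plugging these in, both tail bounds are at most $\exp(-nc^2/8)$, and the union bound yields the claimed factor of $2$.

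The only mildly delicate piece is the upper-tail inequality: a naive use of $\ln(1+c) \leq c - c^2/2$ is false on $(0,1)$, so one has to either optimize differently or verify the weaker $c^2/4$ bound above. Everything else is bookkeeping. I expect this lemma to be almost stated verbatim in the cited notes, so an acceptable alternative is simply to invoke it as a standard concentration fact for chi-squared random variables.
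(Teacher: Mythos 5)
Your proposal is correct, and all the steps check out: the MGF computations, the optimizing choices $t = c/(2(1+c))$ and $t = c/(2(1-c))$, and the elementary inequalities $\ln(1+c) - c \leq -c^2/4$ and $\ln(1-c) + c \leq -c^2/2$ on $(0,1)$ are all verified, giving per-tail bounds of $\exp(-nc^2/8)$ and $\exp(-nc^2/4)$ respectively, so the union bound yields the stated $1 - 2\exp(-nc^2/8)$. Note that the paper itself offers no proof of this fact at all --- it is imported as a black-box concentration bound with a citation to external lecture notes (Theorem 2.7 there) --- so your self-contained Chernoff/chi-squared derivation is exactly the standard argument the cited source uses, and your closing remark that simply invoking the reference would be acceptable matches what the authors actually do. Your caution about the upper tail is also well taken: the naive inequality $\ln(1+c) \leq c - c^2/2$ indeed fails on $(0,1)$, which is precisely why the asymmetric constant $c^2/4$ (and hence the $8$ in the exponent) appears.
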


\begin{fact}[Gaussian tail bound] \label{fact:gaussiantail}
    Let $X \sim \calN(0, \sigma^2)$. For any $t \geq 0$,
    \[
    \Pr[\abs{X} \geq t] \leq 2\exp\left(\frac{-t^2}{2\sigma^2} \right).
    \]
\end{fact}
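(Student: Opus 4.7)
The plan is to apply the standard Cram\'er--Chernoff (moment generating function) argument to the one-sided tail and then use the symmetry of the Gaussian distribution to obtain the two-sided bound. First I would fix $\lambda > 0$ (to be optimized later) and apply Markov's inequality to the nonnegative random variable $e^{\lambda X}$, giving
\[
\Pr[X \geq t] \;=\; \Pr\!\left[e^{\lambda X} \geq e^{\lambda t}\right] \;\leq\; e^{-\lambda t}\,\E\!\left[e^{\lambda X}\right].
\]

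Next I would compute the moment generating function of $X \sim \calN(0,\sigma^2)$ by completing the square inside the Gaussian density integral, obtaining the well-known identity $\E[e^{\lambda X}] = \exp(\lambda^2 \sigma^2/2)$. Substituting this into the previous inequality yields $\Pr[X \geq t] \leq \exp\!\bigl(\lambda^2 \sigma^2/2 - \lambda t\bigr)$ for every $\lambda > 0$. I would then optimize the right-hand side by differentiating the exponent in $\lambda$; the minimizer is $\lambda^{*} = t/\sigma^2$ (which is nonnegative since $t \geq 0$), and plugging back in collapses the bound to $\Pr[X \geq t] \leq \exp\!\bigl(-t^2/(2\sigma^2)\bigr)$.

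Finally, since $X$ is symmetric about $0$, the random variable $-X$ has the same distribution as $X$, so $\Pr[X \leq -t] = \Pr[X \geq t]$, and a union bound over the two tails gives $\Pr[|X| \geq t] \leq 2\exp(-t^2/(2\sigma^2))$, which is exactly the claimed inequality. There is no real obstacle in this proof; it is a textbook application of the Cram\'er--Chernoff method, and the only calculational step is completing the square to evaluate the Gaussian MGF. The case $t = 0$ holds trivially since the right-hand side equals $2 \geq \Pr[|X|\geq 0]$.
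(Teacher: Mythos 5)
Your proof is correct: the Cram\'er--Chernoff argument (Markov's inequality on $e^{\lambda X}$, the Gaussian MGF $\E[e^{\lambda X}]=e^{\lambda^2\sigma^2/2}$, the optimal choice $\lambda^*=t/\sigma^2$, and symmetry to double the one-sided bound) is exactly the standard derivation of this inequality, and your handling of the trivial case $t=0$ is fine. The paper states this as a known fact without proof, so there is no alternative argument to compare against; your writeup supplies the standard justification.
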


\begin{fact}[Hoeffding's Inequality] \label{fact:hoeffding}
    Let $X_1, \ldots, X_k$ be independent random variables in $[a, b]$, and let $X = \sum_{i=1}^k X_i$. 
    Then for any constant $\delta \geq 0$,
    \[
    \Pr\left[\abs{X - \E[X]} \geq \delta\right] \leq 2 \exp \left(\frac{-\delta^2}{\sum_{i=1}^k (b - a)^2} \right).
    \]
\end{fact}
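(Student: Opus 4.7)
The plan is to prove the inequality using the standard Chernoff-type technique: exponentiate, apply Markov, use independence to factor the moment generating function, bound each factor by Hoeffding's lemma, and optimize the parameter. Since the claim is a two-sided tail bound, I would first establish a one-sided bound for $\Pr[X - \E[X] \geq \delta]$, and then apply the same argument to $-X = \sum_i (-X_i)$, whose summands are independent and supported in $[-b,-a]$ (again an interval of length $b-a$). A union bound over the two tails then yields the factor of $2$ in front.

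For the one-sided bound, for any $s > 0$, Markov's inequality applied to $e^{s(X - \E[X])}$ gives $\Pr[X - \E[X] \geq \delta] \leq e^{-s\delta}\, \E[e^{s(X - \E[X])}]$. Using the independence of the $X_i$, I would factor the moment generating function as $\E[e^{s(X - \E[X])}] = \prod_{i=1}^k \E[e^{s(X_i - \E[X_i])}]$. The main technical ingredient is Hoeffding's lemma: for a centered random variable $Y$ taking values in an interval of length at most $b-a$, one has $\E[e^{sY}] \leq \exp(s^2 (b-a)^2 / 8)$. Combining these gives
\[
\Pr[X - \E[X] \geq \delta] \leq \exp\!\left(-s\delta + \frac{s^2}{8}\sum_{i=1}^k (b-a)^2\right).
\]
Optimizing at $s = 4\delta/\sum_{i=1}^k (b-a)^2$ yields the bound $\exp(-2\delta^2/\sum_{i=1}^k (b-a)^2)$, which is strictly stronger than the target $\exp(-\delta^2/\sum_{i=1}^k (b-a)^2)$; in fact, even a suboptimal choice of $s$ would be enough for the stated inequality.

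The main obstacle is establishing Hoeffding's lemma itself. The standard argument uses the convexity of $y \mapsto e^{sy}$ on the interval $[a', b']$ containing $Y$ to dominate it by its chord: $e^{sy} \leq \tfrac{b' - y}{b' - a'}\,e^{sa'} + \tfrac{y - a'}{b' - a'}\,e^{sb'}$. Taking expectations and using $\E[Y] = 0$ reduces the problem to bounding a deterministic single-variable function $L(u)$ at $u = s(b' - a')$. A direct calculus check shows $L(0) = L'(0) = 0$ and $L''(u) \leq 1/4$ for all $u$, so Taylor's theorem gives $L(u) \leq u^2 / 8$, which is precisely the lemma. Once this is in hand, everything else in the argument is routine algebra.
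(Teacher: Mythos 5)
Your proposal is the standard and correct Chernoff--Hoeffding argument (Markov on $e^{s(X-\E[X])}$, factor by independence, Hoeffding's lemma, optimize $s$, then a union bound over the two tails), and it in fact yields the stronger constant $\exp\bigl(-2\delta^2/\sum_{i=1}^k (b-a)^2\bigr)$, which immediately implies the slightly lossy form stated here. The paper itself states this as a standard fact without proof, so there is nothing to compare beyond noting that your argument is the textbook one and is complete.
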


\subsection{Language models} \label{sec:language-models}
A \emph{language model} operates over a token alphabet $\calT = \{t_1, \ldots, t_n\}$.
We sometimes identify tokens $t_i$ with their indices $i$.
We consider autoregressive language models that apply the softmax function to obtain a probability distribution over the next token. 
That is, given a prompt and tokens output so far, the model computes a probability distribution $p = [p_1, \ldots, p_n]$ over the next token as follows.
The last layer of the model computes logits $\ell_i, \ldots, \ell_n$ for each token.
Each logit is computed as 
\[
\ell_i = w_{i,0} + \sum_{j=1}^m w_{i,j} v_{i,j}
\]
where $w_{i,0}$ is the \emph{bias} and $\sum_{j=1}^m w_{i,j} v_{i,j}$ is a weighted average of that node's inputs.
The probability $p_i$ is computed by applying the softmax function to the logits; that is,
\[
p_i = \frac{e^{\ell_i}}{\sum_{j=1}^n e^{\ell_j}}.
\]

Because the probabilities are computed using the softmax function, no probabilities are exactly equal to zero; rather, they are extremely small.
However, in practice these extremely small probabilities are functionally zero.
We formally treat these small probabilities as cryptographically negligible.

\subsection{Deferred proofs}

\subsubsection{Proof of \Cref{thm:unremovability-weights}} \label{appendix:unremovability-weights-proof}
\begin{proof}
    In the removability game, let $\wstar$ denote the original content and $\wwat$ denote the watermarked content.
    By our assumption about the adversary's posterior distribution over $\wstar$, we have that from the adversary's perspective, $\wstar \sim \calN(0, \epsilon^2 I)$.
    We will show that for any $z$ produced by the adversary, either $z$ is far from $\wstar$, or $z$ is watermarked with high probability.

    For any fixed $z \in \mathbb{R}^n$, letting $v = \wstar - \wwat \sim \calN(0, \varepsilon^2 I)$ and $u = z - \wwat$,
    \begin{align*}
        (z-\wstar) \cdot (\wwat-\wstar) = (v - u) \cdot v.
    \end{align*}

    We now argue that the quantity $(v - u) \cdot v$ is likely to be large.
    Let $v_1$ denote the first component of $v$, let $\delta = \frac{1}{\sqrt{n}} \norm{u}_2$, and let $f > 1$ be a constant.
    \begin{align}
        \Pr_{v \from \calN(0, \varepsilon^2 I)}[(v - u) \cdot v  < {\tau \epsilon^2 n}] &= \Pr_{v \from \calN(0, \varepsilon^2 I)}\left[\norm{v}_2^2 - \delta v_1 \sqrt{n} < {\tau \epsilon^2 n}\right] \nonumber \\
        &\le \Pr_{v \from \calN(0, \varepsilon^2 I)}\left[\norm{v}_2^2 - \delta \varepsilon f \sqrt{n} < {\tau \epsilon^2 n} \text{ and } v_1 \leq \epsilon f\right] + \Pr_{v \from \calN(0, \varepsilon^2 I)}\left[v_1 > \epsilon f \right] \nonumber \\
        &\le \Pr_{v \from \calN(0, \varepsilon^2 I)}\left[\norm{v}_2^2 - \delta \varepsilon f \sqrt{n} < {\tau \epsilon^2 n}\right] + \frac{1}{2 \pi} \exp\left(-f^2/2\right) \nonumber \\
        &= \Pr\left[\varepsilon^2 \norm{\calN(0,I)}_2^2 < {\tau \epsilon^2 n} + \delta \varepsilon f \sqrt{n}\right] + \frac{1}{2 \pi} \exp\left(-f^2/2\right) \nonumber \\
        &= \Pr\left[\norm{\calN(0,I)}_2^2 < \left(\tau + \frac{\delta f}{\varepsilon \sqrt{n}}\right) n\right] + \frac{1}{2 \pi} \exp\left(-f^2/2\right) \nonumber \\
        &\le 2 \exp\left(-n \left[1 - \tau - \frac{\delta f}{\varepsilon \sqrt{n}}\right]^2 / 8\right) + \frac{1}{2 \pi} \exp\left(-f^2/2\right) \label{eq:final-ineq}
    \end{align}

 Above, we made use of the facts that $v \sim \calN(0, \varepsilon^2 I) \sim \varepsilon \calN(0,I)$, and $v_1 \sim \calN(0, \varepsilon^2)$.
We also made use of the fact that for all $\ell \in \R$ and $u \in \R^n$,
$$\Pr_{v \gets \calN(0, \varepsilon^2 I)}\left[u \cdot v = \ell \right] = \Pr_{v_1 \gets \calN(0, \epsilon^2)} \left[v_1 \norm{u}_2 = \ell \right],$$
which follows from spherical symmetry of a Gaussian.

We've now upper bounded the probability that $(v - u) \cdot v$ is large in \Cref{eq:final-ineq}; it remains to analyze this probability bound.
Observe that for any $f = \omega(\sqrt{\log n})$, in \Cref{eq:final-ineq}, $\frac{1}{2 \pi} e^{-f^2/2}$ is negligible.
For any $\delta = o(\epsilon n/f) = o(\epsilon n/\sqrt{\log n})$, in \Cref{eq:final-ineq} we have that $\tau n - \frac{\delta f\sqrt{n}}{\epsilon}$ is at least $(1 + c) n$ for some constant $c = \tau - 1 - o(1)$.
By \Cref{fact:gaussian-norm-tail}, $\Pr\left[\norm{\calN(0,I)}_2^2 \leq \tau n - \frac{\delta f \sqrt{n}}{\varepsilon}\right] \leq \negl[n]$.
Therefore, the expression in \Cref{eq:final-ineq} is negligible in $n$.

Applying the above, if $z$ is unwatermarked with non-negligible probability we must have $\norm{z - \wwat}_2 \geq \omega\left(\frac{\epsilon n}{\sqrt{\log n}}\right)$.
By \Cref{fact:gaussian-norm-tail}, for any constant $\alpha$, $\Pr\left[\norm{\wwat - \wstar}_2 \geq \frac{\alpha \epsilon n}{\sqrt{\log n}}\right] \leq \negl[n]$. 
By a union bound, if $z$ is unwatermarked then with overwhelming probability we must have $\norm{z - \wwat}_2 \geq \omega\left(\frac{\epsilon n}{\sqrt{\log n}}\right)$ and $\norm{\wwat - \wstar}_2 < \frac{\alpha \epsilon n}{\sqrt{\log n}}$.
By a triangle inequality, 
\begin{align*}
    \norm{z - \wstar}_2 &\geq \norm{z - \wwat}_2 - \norm{\wwat - \wstar}_2\\
    &\geq \omega\left(\frac{\epsilon n}{\sqrt{\log n}}\right) - \frac{\alpha \epsilon n}{\sqrt{\log n}}\\
    &\geq \omega\left(\frac{\epsilon n}{\sqrt{\log n}}\right).
\end{align*}
Finally, again by \Cref{fact:gaussian-norm-tail}, with overwhelming probability $\norm{\wwat - \wstar}_2 = \Theta(\epsilon \sqrt{n})$.
Therefore, $\norm{z - \wstar}_2 \geq \omega \left(\frac{\sqrt{n}}{\sqrt{\log n}} \right)$.

\end{proof}

\subsubsection{Proof of \Cref{thm:soundness-weights}} \label{appendix:soundness-weights-proof}

\begin{proof}
Let $x' \in \R^n$. We will show that with overwhelming probability, $\norm{x' - \wat{x}} > \frac{1}{2}\epsilon n$.
Since the detector outputs $\true$ only if $\norm{x' - \wat{x}} > \frac{1}{2} \epsilon n$, this is sufficient to show that any fixed $x'$ will not be falsely detected.

Let $\Delta = \wat{x} - x$, and observe that
\begin{align*}
    \norm{x' - \wat{x}}^2 &= \norm{x' - (x + \Delta)}^2\\
    &= \norm{\Delta - (x - x')}^2\\
    &= \norm{\Delta}^2 - 2\Delta \cdot (x - x') + \norm{x - x'}^2
\end{align*}
Recall that by spherical symmetry of a Gaussian, $\Delta \cdot (x - x')$ is distributed as $\Delta_1\norm{x - x'}$ where $\Delta_1 \sim \calN(0, \epsilon^2)$.
Therefore, with overwhelming probability, $2\Delta(x-x') \leq 2\sqrt{n}\norm{x-x'}$.
By \Cref{fact:gaussian-norm-tail} and a union bound, with overwhelming probability we also have $\norm{\Delta}^2 \geq 0.9\epsilon^2 n^2$. Therefore,
$$\norm{x' - \wat{x}}^2 \geq 0.9\epsilon^2 n^2 + \norm{x-x'}(\norm{x-x'} - 2\sqrt{n})$$

We now consider two cases. If $\norm{x-x'} \geq 2\sqrt{n} n$, we have $\norm{x' - \wat{x}}^2 \geq 0.9\epsilon^2 n^2$; therefore, $\norm{x' - \wat{x}} \geq \frac{1}{2}\epsilon n$ as desired.
If $\norm{x-x'} < 2\sqrt{n}$, we have $\norm{x' - \wat{x}}^2 \geq 0.9\epsilon^2 n^2 + 4n \geq 0.8\epsilon^2 n^2$ for sufficiently large $n$.
\end{proof}

\subsubsection{Proof of \Cref{lemma:expected-val}} \label{appendix:expected-val-proof}

\begin{proof}
We first show that \Cref{ass:denoms} implies a useful relationship between $(z_t - \wstar_t)$ and the ratio between $p^i_t$ and $q^i_t$.
    Observe that
    \begin{align*}
    p^i_t &= \frac{e^{\ell^i_t + z_t}}{\sum_{r \in [n]} e^{\ell^i_r + z_r}} \\
    &= \frac{e^{\ell^i_t + \wstar_t + (z_t - \wstar_t)}}{\sum_{r \in [n]} e^{\ell^i_r + z_r}} \\
    &\approx \frac{e^{\ell^i_t + \wstar_t + (z_t - \wstar_t)}}{\sum_{r \in [n]} e^{\ell^i_r + \wstar_r}} \text{ by \Cref{ass:denoms}}\\
    &= \frac{e^{\ell^i_t + \wstar_t}}{\sum_{r \in [n]} e^{\ell^i_r + \wstar_r}} \cdot e^{(z_t - \wstar_t)}\\
    &= q^i_t \cdot e^{(z_t - \wstar_t)}\\
    &\approx q^i_t + q^i_t \cdot (z_t - \wstar_t) \text{ using the approximation that } e^x \approx 1 + x.
\end{align*}
Rearranging, we have the useful fact:
\begin{align*} 
    (z_t - \wstar_t) \approx p^i_t/q^i_t - 1
\end{align*}
where $\approx$ comes from \Cref{ass:denoms} and the approximation $e^x \approx 1 + x$.
Thus we have arrived at \Cref{ass:approx}.

Armed with \Cref{ass:denoms,ass:approx}, and \Cref{fact:high-entropy-quality}, we can prove the lemma.
Our goal is to rewrite the inner product in terms of the expectation of $(\wwat - \wstar)_r$ for tokens $r$ appearing in the given text.
Let $T$ denote the set of tokens over which $p^i$ has non-negligible probability.
Let $p^i|_T$ denote the vector of probabilities when $p^i|_T$ is restricted only to tokens in $T$.
Let $(z - \wstar)|_T$ and  $(\wwat - \wstar)|_T$ denote the vectors $(z - \wstar)$ and $(\wwat - \wstar)$ restricted to the tokens in $T$.

We first manipulate the expression for the dot product and apply \Cref{ass:approx}:
\begin{align}
    (z - \wstar)|_T \cdot (\wwat - \wstar)|_T &= \sum_{t\in T} (z_t - \wstar_t) \cdot (\wwat - \wstar)_t \nonumber \\
    &\leq \sum_{t \in T} (p^i_t/q^i_t - 1) \cdot (\wwat - \wstar)_t + \eta \abs{(\wwat - \wstar)_t}\text{ by \Cref{ass:approx}} \nonumber \\
    &= \E_{r \sim p^i|_T}\left[\frac{1}{q^i_r}(\wwat - \wstar)_r\right] - \sum_{t \in T} (\wwat - \wstar)_t + \eta \abs{(\wwat - \wstar)_t} \label{eq:final}
\end{align}
    We now analyze the expression $\E_{r \sim p^i|_T}\left[\frac{1}{q^i_r}(\wwat - \wstar)_r\right]$ to remove the term $\frac{1}{q^i_r}$, which cannot be determined from just the text. 
    The high entropy and quality of $p^i$ lets us do exactly this, invoking \Cref{fact:high-entropy-quality}.
     
    Recall that \Cref{fact:high-entropy-quality} says that for the set of tokens $T$ over which $p^i$ has non-negligible probability, 
    for all $t \in p^i$ we have $1/q^i_t \leq c_1 \abs{T}/c_2$.
    Let $p^i|_T$ denote $p^i$ restricted to tokens in $T$ and renormalized.
    Applying this inequality, we have

    \begin{align*}
        \E_{r \sim p^i|_T}\left[\frac{1}{q^i_r}(\wwat - \wstar)_r\right] &= \sum_{t \in T} p^i_t(1/q^i_t)(\wwat - \wstar)_t\\
        &\leq \frac{c_1 \abs{T}}{c_2}\sum_{t \in T} p^i_t (\wwat - \wstar)_t \text{ by \Cref{fact:high-entropy-quality}}\\
        &= \frac{c_1 \abs{T}}{c_2} \E_{r \sim p^i|_T}\left[(\wwat - \wstar)_r\right]
    \end{align*}

    Therefore, 
    \begin{align*}
        \E_{r\sim p^i|_T} \left[(\wwat - \wstar)_r \right] &\geq \frac{c_2}{c_1\abs{T}}\E_{r \sim p^i|_T}\left[\frac{1}{q^i_r}(\wwat - \wstar)_r\right]\\
        &\geq \frac{c_2}{c_1\abs{T}}\left((z - \wstar)|_T \cdot (\wwat - \wstar)|_T + \sum_{t \in T} (\wwat - \wstar)_t\right) - \eta \abs{(\wwat - \wstar)_t} \text{ by \Cref{eq:final}}
    \end{align*}
    which is a lower bound on this expected value with no dependence on $q$, as desired.    

    Finally, we analyze the expected value over the choice of $\wstar$. Recall that from the adversary's perspective, $\wstar \sim \calN(0, \epsilon^2 I)$.
    This immediately lets us simplify the term with $\eta$, using the mean of the folded normal distribution:
    \begin{align} \label{eq:eta}
        \E_{\wstar} \left[\eta \abs{(\wwat - \wstar)_t} \right] = \eta \epsilon \sqrt{2/\pi}.
    \end{align}

    As in the proof of \Cref{thm:unremovability-weights}, let $v = \wstar - \wwat$ and let $u = z - \wwat$.
    Then $(z - \wstar) \cdot (\wwat - \wstar) = (u - v) \cdot v$, where $v \sim \calN(0, \epsilon^2 I)$.

    \begin{align*}
        \E_{\wstar} \E_{r\sim p^i|_T} \left[(\wwat - \wstar)_r \right] &\geq \E_{\wstar} \left[\frac{c_2}{c_1 \abs{T}}\left((z - \wstar)|_T \cdot (\wwat - \wstar)|_T + \sum_{t \in T} (\wwat - \wstar)_t - \eta \abs{(\wwat - \wstar)_t}\right) \right]\\
        &= \E_{\wstar} \left[\frac{c_2}{c_1 \abs{T}}\left((z - \wstar)|_T \cdot (\wwat - \wstar)|_T + \sum_{t \in T} (\wwat - \wstar)_t \right) \right] - \frac{c_2 \eta \epsilon \sqrt{2/\pi}}{c_1} \text{ by \Cref{eq:eta}}\\
        &= \E_{\wstar} \left[\frac{c_2}{c_1 \abs{T}}(z - \wstar)|_T \cdot (\wwat - \wstar)|_T \right] - \frac{c_2 \eta \epsilon \sqrt{2/\pi}}{c_1 }\text{ since each } (\wwat - \wstar)_t \text{ has mean }0 \\
        &= \E_{v \sim \calN(0, \epsilon^2 I)} \left[\frac{c_2}{c_1 \abs{T}}(u-v)|_T \cdot v|_T \right] - \frac{c_2 \eta \epsilon \sqrt{2/\pi}}{c_1}\\
        &= \E_{v \sim \calN(0, \epsilon^2 I)} \left[\frac{c_2}{c_1 \abs{T}} \left(\norm{v|_T}^2_2 - u|_T \cdot v|_T\right) \right] - \frac{c_2 \eta \epsilon \sqrt{2/\pi}}{c_1}\\
        &= \E_{v \sim \calN(0, \epsilon^2 I)} \left[\frac{c_2}{c_1 \abs{T}}\norm{v|_T}^2_2 \right] - \frac{c_2 \eta \epsilon \sqrt{2/\pi}}{c_1} \text{ since each } u_i \cdot v_i \text{ has mean 0}\\
        &= c_2 \epsilon^2/c_1  - \frac{c_2 \eta \epsilon \sqrt{2/\pi}}{c_1}
    \end{align*}
    Therefore, 
    \[
    \E_{\substack{\wstar\\ r \sim p^i|_T}} [\wwat - \wstar] \geq c_2 \epsilon^2/c_1  - \frac{c_2 \eta \epsilon \sqrt{2/\pi}}{c_1}.
    \]
\end{proof}

\subsubsection{Proof of \Cref{thm:unremovability-text}} \label{appendix:unremovability-text-proof}.

\begin{proof}
First, observe that $\E_{\substack{x_i \sim p_i\\ \wstar}}[(\wwat - \wstar)_{x_i}] \geq 0$ for any distribution $p_i$ produced by the adversary.
We've shown in \Cref{lemma:expected-val} that for $p_i$'s that are $c_1$-high entropy and $c_2$-high quality, $\E_{\substack{x_i \sim p_i|_T\\ \wstar}}[(\wwat - \wstar)_{x_i}] \geq c_2 \epsilon^2/c_1$, where $T$ is the set of tokens with non-negligible probability.
Formally, we model the tokens appearing in the text as sampled from $p^i|_T$ rather than $p^i$, since the negligible probability tokens will never be sampled. 

By assumption, at least a $\delta$ fraction of $p_i$'s have are $c_1$-high entropy and $c_2$-high quality.
Therefore, letting $U$ denote the set of indices of distinct tokens in the response,
\[
\frac{1}{\abs{U}}\E\left[\sum_{i \in U} (\wwat - \wstar)_{x_i} \right] \geq \frac{\delta c_2 \epsilon^2}{c_1} - \alpha = 2\epsilon^2\ttext
\]

By \Cref{ass:independence}, the $(\wwat - \wstar)_{x_i}$'s are independent.
Furthermore, all $(\wwat - \wstar)_{x_i}$'s lie in $[-\secpar^{1/4}, \secpar^{1/4}]$ with overwhelming probability by a standard Gaussian tail bound (\Cref{fact:gaussiantail}):
\[
    \Pr\left[\exists x_j \in U \text{ s.t. } \Delta(x_j) \notin [-\secpar^{1/4}, \secpar^{1/4}]\right] \leq 2\abs{U} \exp \left(\frac{-\secpar^{1/2}}{2\epsilon^2} \right).
\]

Let $X = \sum_{i \in U}(\wwat - \wstar)_{x_i}$.
By a Hoeffding bound (\Cref{fact:hoeffding}),
\begin{align*}
    \Pr\left[\abs{X - \E[X]} \geq \E[X]/2 \right] &\leq 2\exp \left(\frac{-2(\E[X]/2)^2}{\abs{U}(2\secpar^{1/4})^2}
    \right)\\
    &\leq 2 \exp \left(\frac{-2 \abs{U}^2 \epsilon^4 \ttext^2}{4\abs{U} \sqrt{\secpar}} \right) \text{ since } \E[X] \geq \abs{U} 2 \epsilon^2 \ttext\\
    &= 2 \exp \left(\frac{-\abs{U} \epsilon^4 \ttext^2}{2 \sqrt{\secpar}} \right)
\end{align*}
which is negligible in $\secpar$, as $\abs{U} \geq \secpar$.
\end{proof}

\subsubsection{Proof of \Cref{thm:soundness-text}} \label{appendix:soundness-text-proof}
\begin{proof}
    Consider a given text, and recall that our detector only looks at distinct tokens.
    Therefore, let $x_1, \ldots, x_k$ denote the given text with the duplicate tokens removed.
    Observe that for each $i \geq \secpar$, the count during that iteration is 
    $\sum_{j=1}^i \Delta(x_j)$, where each $\Delta(x_j)$ is an i.i.d. Gaussian $\calN(0, \epsilon^2)$.
    Therefore, their sum is also a Gaussian random variable $X \sim \calN(0, i \epsilon)$.
    
    By \Cref{fact:gaussiantail},
    \begin{align*}
        \Pr \left[X \geq i\ttext \epsilon^2 \right] &\leq 2 \exp\left(\frac{-2i^2 \ttext^2 \epsilon^4}{2 i \epsilon^2}\right)\\
        &= 2 \exp\left(-i\ttext^2 \epsilon^2\right)
    \end{align*}
    
    Since $i \geq \secpar$, this is negligible in $\secpar$.
    By a union bound, the probability that this sum exceeds the threshold for any $i$ is at most $k$ times this expression, which is also negligible.
\end{proof}

\subsection{Additional experiment specifications and figures} \label{appendix:experiments}
\heading{Experiment specifications}
We generate responses of up to 300 tokens using random sampling, at a temperature of 0.9. We set the maximum ngram repeat length to 5, to avoid overly repetitive responses.
We generate three types of responses: story responses, essay responses, and code responses. We use the following prompts:

\begin{description}
    \item[Story prompt:] ``Here is one of my favorite stories: It was a ''
    \item[Essay prompt:] ``Here is one of my favorite essays: It is often thought that ''
    \item[Code prompt:] ``Here is a python script for your desired functionality: import ''
\end{description}

To evaluate quality, we ask Mistral-7B-Instruct to evaluate each response based on its type, and to provide a score out of 100.

\heading{Additional figures}
We include an additional figure showing the quality of watermarked texts. We also include larger versions of \Cref{fig:unremovability}.
\begin{figure}
    \centering
    \includegraphics[width=300pt]{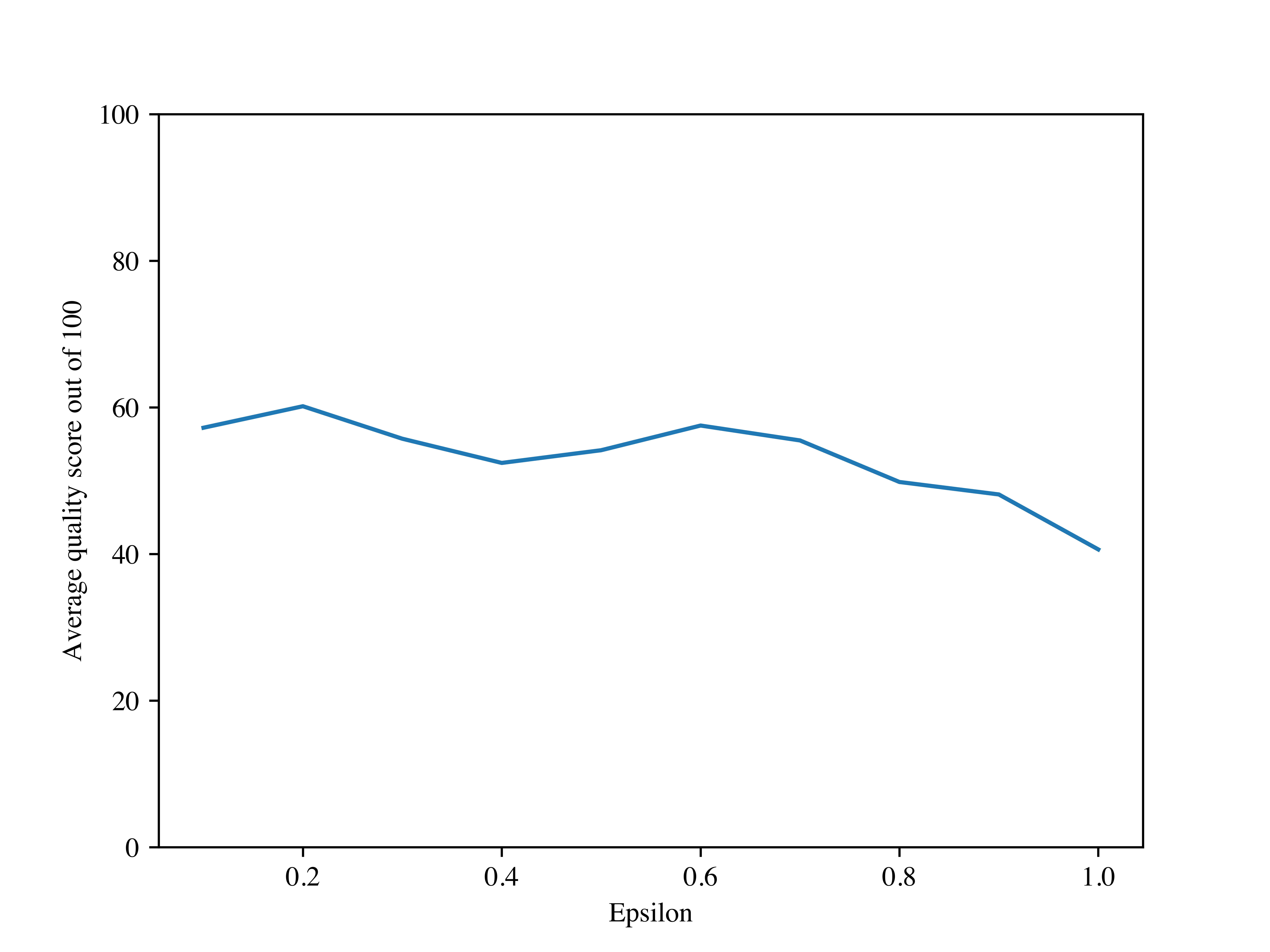}
    \caption{Quality scores of watermarked texts generated with various parameters of epsilon. The quality score of unwatermarked text was 59.933. \label{fig:quality}}
\end{figure}

\begin{figure}
    \centering
      \includegraphics[width=.95\linewidth]{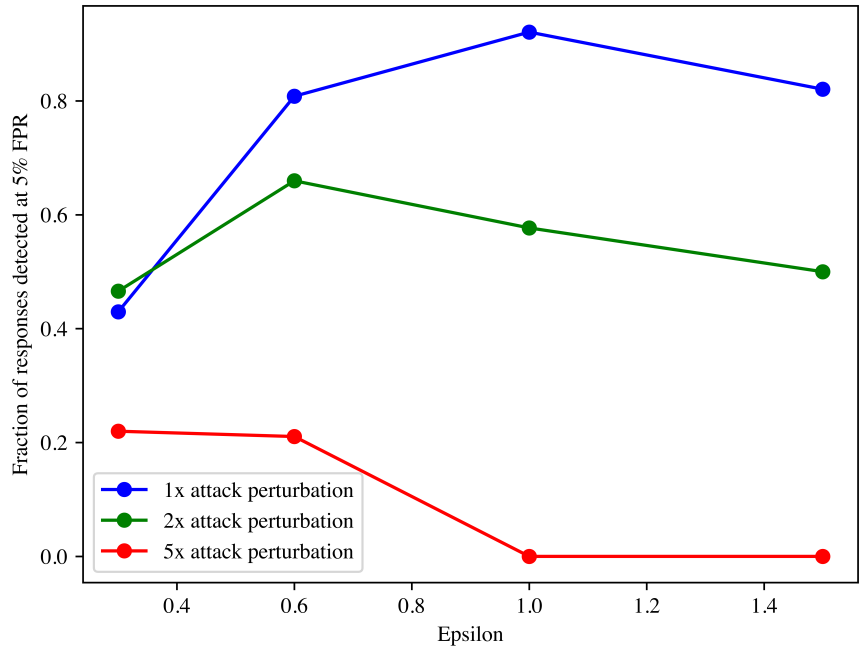}
    \caption{Detection rates for adversarially perturbed watermarked OPT-6.7B models, to simulate a removal attack. The x-axis plots the epsilon parameter used in the watermark. The three curves show detection rates for models obtained by adding additional Gaussian noise to the biases, of standard deviaton 1, 2, and 5 times epsilon. (\Cref{fig:unremov-detect})}
\end{figure}

\begin{figure}
  \centering
  \includegraphics[width=.95\linewidth]{figures/unremov_qual.png}
  \caption{Quality scores of watermarked texts generated with various parameters of epsilon, under the same perturbation attacks. The quality score of unwatermarked text was 59.933. (\Cref{fig:unremov-quality})}
\end{figure}
\end{document}